\PassOptionsToPackage{unicode}{hyperref}
\PassOptionsToPackage{hyphens}{url}
\PassOptionsToPackage{dvipsnames,svgnames,x11names}{xcolor}
\PassOptionsToPackage{format=plain,
  labelfont={bf,sf,small,singlespacing},
  textfont={sf,small,singlespacing},
  justification=justified,
  margin=0.25in}{caption}%
\documentclass[
  11pt,
]{article}
\usepackage{xcolor}
\usepackage[lmargin=1in,rmargin=1in,tmargin=1in,bmargin=1in]{geometry}
\usepackage{amsmath,amssymb}
\setcounter{secnumdepth}{4}
\usepackage{iftex}
\ifPDFTeX
  \usepackage[T1]{fontenc}
  \usepackage[utf8]{inputenc}
  \usepackage{textcomp} 
    \usepackage[p,osf,swashQ,scale=1.02]{cochineal}
  \usepackage{amsthm}
  \usepackage[cochineal,vvarbb]{newtxmath}
      \usepackage{biolinum}
    \usepackage[scale=0.95,varl]{inconsolata}
\else 
    \usepackage{unicode-math} 
    \usepackage{mathpazo}
  \usepackage{newpxtext}
    \usepackage{biolinum}
    \defaultfontfeatures{Scale=MatchLowercase}
  \defaultfontfeatures[\rmfamily]{Ligatures=TeX,Scale=1}
\fi

\usepackage{sectsty}
\ifPDFTeX\else
\fi
\IfFileExists{upquote.sty}{\usepackage{upquote}}{}
\IfFileExists{microtype.sty}{
  \usepackage[]{microtype}
  \UseMicrotypeSet[protrusion]{basicmath} 
}{}

\usepackage{longtable,booktabs,array}
\usepackage{calc} 
\usepackage{etoolbox}
\makeatletter
\patchcmd\longtable{\par}{\if@noskipsec\mbox{}\fi\par}{}{}
\makeatother
\IfFileExists{footnotehyper.sty}{\usepackage{footnotehyper}}{\usepackage{footnote}}
\makesavenoteenv{longtable}
\usepackage{graphicx}
\makeatletter
\newsavebox\pandoc@box
\newcommand*\pandocbounded[1]{
  \sbox\pandoc@box{#1}%
  \Gscale@div\@tempa{\textheight}{\dimexpr\ht\pandoc@box+\dp\pandoc@box\relax}%
  \Gscale@div\@tempb{\linewidth}{\wd\pandoc@box}%
  \ifdim\@tempb\p@<\@tempa\p@\let\@tempa\@tempb\fi
  \ifdim\@tempa\p@<\p@\scalebox{\@tempa}{\usebox\pandoc@box}%
  \else\usebox{\pandoc@box}%
  \fi%
}
\def\fps@figure{htbp}
\makeatother

\setlength{\emergencystretch}{3em} 

\providecommand{\tightlist}{%
  \setlength{\itemsep}{0pt}\setlength{\parskip}{0pt}}

\usepackage[]{natbib}
\bibliographystyle{apalike}

\usepackage{mathtools}
\usepackage{amsmath}
\usepackage{amsthm}
\usepackage{amssymb}
\usepackage[italicdiff]{physics}

\usepackage[p,osf,swashQ]{cochineal}
\usepackage[cochineal,vvarbb]{newtxmath}
\usepackage{float}
\usepackage{placeins}
\usepackage{setspace}
\newcommand{\papersp}{1.0} 


\usepackage{tikz}
\usetikzlibrary{arrows.meta}
\usepackage[dvipsnames]{xcolor}

\definecolor{sooblue}{HTML}{10577F}
\definecolor{proxred}{HTML}{9A5510}
\definecolor{ivpink}{HTML}{C040BA}
\definecolor{gray}{HTML}{777777}

\newcommand{\soo}{\mathrm{soo}}
\newcommand{\prox}{\mathrm{prox}}
\newcommand{\iv}{\mathrm{iv}}

\newcommand{\bias}{\mathrm{Bias}}

\newcommand{\TRV}{\mathrm{RV}}

\newcommand{\hlsoo}[1]{\textcolor{sooblue}{#1}}
\newcommand{\hlprox}[1]{\textcolor{proxred}{#1}}
\newcommand{\hliv}[1]{\textcolor{ivpink}{#1}}

\newcommand{\biasSOO}{\mathrm{Bias}(\tau_\soo; \hlsoo{R_{Y \sim U \mid Z, W_Y, W_Z}}, \hlsoo{R_{Z \sim U \mid W_Y, W_Z})}}

\newcommand{\sooRsq}{\{\hlsoo{R_{Y \sim U \mid Z, W_Y, W_Z}}, \hlsoo{R_{Z \sim U \mid W_Y, W_Z}}\}}





\newcommand{\R}{\ensuremath{\mathbb{R}}}


\newcommand{\eps}{\varepsilon} 


\DeclareMathOperator{\E}{\mathbb{E}}

\newcommand{\indep}{\mathbin{\perp\!\!\!\!\!\:\perp}} 



\DeclareMathOperator{\Var}{\mathbb{V}}

\DeclareMathOperator{\cov}{cov}
\DeclareMathOperator{\cor}{corr}
\let\var\undefined
\DeclareMathOperator{\var}{var}
\DeclareMathOperator{\sd}{sd}



\makeatletter
\@ifpackageloaded{caption}{}{\usepackage{caption}}
\AtBeginDocument{%
\ifdefined\contentsname
  \renewcommand*\contentsname{Table of contents}
\else
  \newcommand\contentsname{Table of contents}
\fi
\ifdefined\listfigurename
  \renewcommand*\listfigurename{List of Figures}
\else
  \newcommand\listfigurename{List of Figures}
\fi
\ifdefined\listtablename
  \renewcommand*\listtablename{List of Tables}
\else
  \newcommand\listtablename{List of Tables}
\fi
\ifdefined\figurename
  \renewcommand*\figurename{Figure}
\else
  \newcommand\figurename{Figure}
\fi
\ifdefined\tablename
  \renewcommand*\tablename{Table}
\else
  \newcommand\tablename{Table}
\fi
}
\@ifpackageloaded{float}{}{\usepackage{float}}
\floatstyle{ruled}
\@ifundefined{c@chapter}{\newfloat{codelisting}{h}{lop}}{\newfloat{codelisting}{h}{lop}[chapter]}
\floatname{codelisting}{Listing}

\usepackage{amsthm}
\theoremstyle{plain}
\newtheorem{proposition}{Proposition}[section]
\theoremstyle{plain}
\newtheorem{corollary}{Corollary}[section]
\theoremstyle{plain}
\newtheorem{theorem}{Theorem}[section]
\theoremstyle{plain}
\newtheorem{lemma}{Lemma}[section]
\theoremstyle{remark}
\AtBeginDocument{}

\makeatother
\makeatletter
\makeatother
\makeatletter
\@ifpackageloaded{caption}{}{\usepackage{caption}}
\@ifpackageloaded{subcaption}{}{\usepackage{subcaption}}
\makeatother
\usepackage{bookmark}
\IfFileExists{xurl.sty}{\usepackage{xurl}}{} 
\urlstyle{same}
\hypersetup{
  pdftitle={Relative Bias Under Imperfect Identification in Observational Causal Inference},
  pdfauthor={Melody Huang; Cory McCartan},
  pdfkeywords={sensitivity analysis, proximal inference, instrumental
variables},
  colorlinks=true,
  linkcolor={black},
  filecolor={Maroon},
  citecolor={VioletRed4},
  urlcolor={DodgerBlue4},
  pdfcreator={LaTeX via pandoc}}

\makeatletter
\newcommand{\assumplabel}[2]{%
   \protected@write \@auxout {}{\string\newlabel{#1}{{#2}{\thepage}{#2}{#1}{}}}%
   \hypertarget{#1}{#2}%
}
\makeatother
\newenvironment{assump}[2][]{%
\par\medskip\noindent%
\def\tempa{}\def\tempb{#1}%
\lowercase{\def\lblkey{asm-#2}}%
\textbf{Assumption \assumplabel{\lblkey}{#2}}%
\ifx\tempa\tempb\else~(#1)\fi%
\textbf{.}\itshape%
}{\medskip\normalfont}
\makeatletter
\newcommand\@shorttitle{}
\newcommand\shorttitle[1]{\renewcommand\@shorttitle{#1}}
\usepackage{fancyhdr}
\fancyhf{}
\pagestyle{fancy}

\fancyheadoffset{0pt}
\lhead{\scshape \@shorttitle}
\rhead{\scshape March 24, 2026}
\cfoot{\thepage}
\makeatother
\renewenvironment{abstract}{
  \centerline
  {\large\sffamily\bfseries Abstract}\vspace{-0.25em}
  \begin{quote}\small
}{
  \end{quote}
}

\title{\sffamily\bfseries\huge\parfillskip=0pt
\rightskip=0pt plus .5\textwidth
\leftskip=0pt plus .5\textwidth
\emergencystretch=.3\textwidth Relative Bias Under Imperfect
Identification in Observational Causal Inference}
\shorttitle{Relative Bias Under Imperfect Identification in Observational Causal Inference}
\author{\textbf{Melody Huang}\footnote{
To whom correspondence should be addressed.
Email: \texttt{\href{mailto:melody.huang@yale.edu}{melody.huang@yale.edu}}.
The authors thank Chad Hazlett, Kosuke Imai, Sam Pimentel, Eric Tchetgen
Tchetgen, Carlos Cinelli, participants at PolMeth XLII, and the Oxford
Causal Inference Reading Group for helpful comments and suggestions.}
\\Department of Political Science%
\vspace{2pt}
\\Department of Statistics \& Data Science%
\\Yale University%
\vspace{2pt}
 \and \textbf{Cory McCartan}
\\Department of Statistics%
\\Pennsylvania State University%
\vspace{2pt}
 }
\date{March 24, 2026}

\begin{document}
\allsectionsfont{\sffamily}

\maketitle

\begin{abstract}
To conduct causal inference in observational settings, researchers must
rely on certain identifying assumptions. In practice, these assumptions
are unlikely to hold exactly. This paper considers the bias of
selection-on-observables, instrumental variables, and proximal inference
estimates under violations of their identifying assumptions. We develop
bias expressions for IV and proximal inference that show how violations
of their respective assumptions are amplified by any unmeasured
confounding in the outcome variable. We propose a set of sensitivity
tools that quantify the sensitivity of different identification
strategies, and an augmented bias contour plot visualizes the
relationship between these strategies. We argue that the act of choosing
an identification strategy implicitly expresses a belief about the
degree of violations that must be present in alternative identification
strategies. Even when researchers intend to conduct an IV or proximal
analysis, a sensitivity analysis comparing different identification
strategies can help to better understand the implications of each set of
assumptions. Throughout, we compare the different approaches on a
re-analysis of the impact of state surveillance on the incidence of
protest in Communist Poland.
\end{abstract}

\textbf{\textit{Keywords}}\quad sensitivity
analysis~\textbullet~proximal inference~\textbullet~instrumental
variables

\clearpage

\section{Introduction}\label{sec-intro}

Causal inference in observational studies requires identification
assumptions which account for potential confounding that occurs from
non-random assignment. The most familiar set identification assumptions
is that of no unobserved confounding, also known as ignorable treatment
assignment, or \emph{selection on observables} (SOO): given observed
covariates, potential outcomes are independent of treatment. In many
observational settings, however, SOO is considered implausible. The
presence of unobserved confounders can bias treatment effects which are
estimated under an SOO assumption.

Alternative identification strategies have been proposed as an
alternative to the selection-on-observables assumptions. Researchers may
try to find an \emph{instrumental variable} (IV) which is associated
with treatment but satisfies an exclusion restriction with the outcome.
Under certain additional assumptions, an IV allows for estimating the
effect of interest. More recently, \emph{proximal} inference has been
developed as an alternative identification approach
\citep{miao2015identification, miao2018identifying, cui2024semiparametric}.
Instead of assuming researchers measure all relevant confounders,
proximal inference assumes researchers have access to two informative
proxies: a treatment proxy and an outcome proxy. If these proxies
satisfy certain conditional independence assumptions, not unlike the
exclusion restriction in IV, then a treatment effect can be estimated,
even though the proxies themselves are confounded.

The existence of three distinct identification approaches leads to
natural questions about how the corresponding estimators compare in
practice. Much work has been conducted to compare the relative
efficiency of the different approaches \citep[see,
e.g.,][]{andrews2019weak}. In particular, weak instruments, and by
extension, weak proxies, can inflate the variance of IV and proximal
estimators, such that SOO estimators will have lower mean squared error
even in settings when the IV and proximal assumptions hold, and the SOO
assumptions do not.

Fewer studies have examined how the bias of different identification
strategies compares. The nature of identification assumptions is that
they are untestable; in real-world observational data, it is unlikely
that any set of identifying conditions, SOO, IV, or proximal, will hold
exactly. It is important, then, to understand the bias that may result
from plausible violations of each set of identifying assumptions, and
how this bias compares between identification strategies. Such a
comparison is useful not only in picking an identification strategy, but
also in understanding the assumptions made as part of a primary causal
analysis under a single strategy.

Recent literature has introduced different sensitivity frameworks to
evaluate the sensitivity of estimators to potential violations in the
underlying identifying assumptions for SOO
\citep[e.g.,][]{rosenbaum1987sensitivity, frank2000impact, imbens2003sensitivity, tan2006distributional, carnegie2016assessing, ding2019decomposing, zhao2019sensitivity, cinelli2020making, hong2021did, dorn2023sharp, huang2025variance},
for IV
\citep{kang2021ivmodel, freidling2022optimization, cinelli2025iv}, and
for proximal \citep{cobzaru2024bias}, to name a few examples. However,
the different frameworks and approaches make it challenging to compare
bias \emph{across} identification strategies.

This paper introduces a framework for researchers to reason about the
robustness of different identification approaches under the realistic
setting that none of the identification assumptions hold exactly. Our
framework enables researchers to carefully consider the trade-offs made
between using a traditional SOO, IV, and proximal assumptions, and
understand the conditions under which each identification strategy will
provide the best estimates of the underlying effect.

First, in Section~\ref{sec-bias}, we derive expressions for the bias of
both IV and proximal estimators in terms of the violations of their
underlying assumptions. We show that the bias that arises from
violations in one set of identification assumptions can impact the bias
in alternative identification approaches. Specifically, the bias that
arises in IV will additionally depend on violations in the SOO
assumptions, while the bias for a proximal estimator will depend on
\emph{both} violations in the IV assumptions and the SOO assumptions.

Second, in Sections \ref{sec-sens-comp} and \ref{sec-unify}, we propose
a set of sensitivity tools in the form of numerical and visual summary
measures. To compare the sensitivities of the various identification
approaches, we introduce a generalization of standard robustness values
called the \emph{total robustness value}, which measures the minimum
amount of total violation of a set of identification assumptions that
result in a certain amount of bias. Then, to connect the identifying
assumptions themselves, we show that the bias of both IV and proximal
inference can be written as a function of just the
selection-on-observables parameters and the data. This unification
allows us to relate the three identification strategies on a single bias
contour plot.

Our sensitivity framework can aid practitioners in several ways. Often,
applied researchers identify a causal effect of interest, consider
plausible identification strategies, and then collect data to enable
inference under their chosen strategy. In contrast to existing
sensitivity analyses, which work with a single identification strategy,
our proposed measures are constructed to directly account for the nested
dependency in violations across the different identification
assumptions. This allows researchers to directly compare different
identification strategies and choose one most appropriate for their
data.

The proposed framework is still useful, however, for researchers who
have already settled on an identification strategy for their primary
analysis. As both the bias expressions in Section~\ref{sec-bias} and the
unification in Section~\ref{sec-unify} show, the SOO, IV, and proximal
assumptions are deeply interrelated. Even when the SOO assumptions are
not plausible, the value of the SOO \emph{estimate} places certain
restrictions on plausible values for the different sensitivity
parameters. For example, the observed data may imply that \emph{either}
the IV assumptions hold \emph{or} a certain minimum amount of treatment
confounding must exist. If an analyst is confident about the direction
of confounding bias, that may also immediately rule out the plausibility
of certain other identifying assumptions. Thus our framework provides
for better interrogation and integration of identifying assumptions,
allowing researchers to consider their assumptions from a different
angle and move beyond fully \emph{a priori} reasoning about their
plausibility.

\subsection{Running example: Surveillance and Protest in Communist
Poland}\label{sec-intro-ex}

To illustrate our discussion and demonstrate the proposed sensitivity
analyses in a concrete setting, we re-analyze data from
\citet{hk2022poland}, who studied the effects of state surveillance on
resistance to the Polish regime, as measured by group and individual
protests. The authors digitized records from the Służba Bezpieczeństwa
(SB; Department of Security) from 1949--1989, covering the entire
existence of the Polish People's Republic. These records include the
number of secret agents assigned to each municipality each year. The
authors focus on the region of Upper Silesia, where data was
particularly comprehensive. They also collected data on protests
organized by the Solidarność (Solidarity) movement, which was the center
of resistance to the governing regime in the 1980s, as well as indirect
data on individual refusals to work on Saturdays. The research question
is whether additional surveillance of a municipality caused either group
or individual protest to increase or decrease.

Hager and Krakowski's primary analysis is based on an instrumental
variable: the number of local Catholic priests who were secretly
cooperating with the SB. The police intentionally recruited priests, who
in turn recruited other local informants. However, priests were randomly
assigned to parishes by the Catholic Church, making the number of
\emph{corrupt} priests plausibly exogenous. The authors include a
detailed study and discussion of the validity of the instrument, which
we also discuss below. As a robustness check, they also conducted a
cross-sectional analysis under a SOO assumption. Both analyses found
that state surveillance increased group protests but decreased
individual protests. Here, we study the sensitivity of both the IV and
SOO findings, and also conduct a sensitivity analysis for proximal
inference using the same data.

\section{Setup and Background}\label{setup-and-background}

This section provides an overview of three common identification
strategies for the ATE in cross-sectional observational settings:
selection-on-observables, instrumental variables, and proximal
inference. We express these strategies in terms of a common statistical
model, which we introduce next, and we discuss the plausibility of the
different identifying assumptions in the context of our running example.

We study the following partially linear structural model for an outcome
\(Y\), treatment \(Z\), vector of covariates \(X\), unobserved
confounder \(U\), and two additional variables \(W_Y\) and
\(W_Z\).\footnote{Because we are studying the behavior of three
  different identification strategies, we require a model for all four
  variables: \(Y\), \(Z\), \(W_Y\), and \(W_Z\). In settings where
  researchers are only interested in selection-on-observables, they only
  need to assume a model for \(Y\). Similarly, for IV, a model for
  \(W_Z\) and \(Y\) are sufficient.}
\begin{equation}\protect\phantomsection\label{eq-struct}{
\begin{aligned}
Y &= \beta_u U + \beta_{w_y} W_Y + \beta_{w_z} W_Z + \tau Z + f_y(X) + \eps_y \\
Z &= \gamma_u U + \gamma_{w_y} W_Y + \gamma_{w_z} W_Z + f_z(X) + \eps_z \\
W_Z &=\varphi_u U + \varphi_{w_y} W_Y + f_{w_z}(X) + \eps_{w_z} \\
W_Y &= \alpha_u U + f_{w_y}(X) + \eps_{w_y} \\
\end{aligned}
}\end{equation} Above, \(\eps_y, \eps_z, \eps_{w_z}\), and
\(\eps_{w_y}\) are independent noise terms, and \(f_y\), \(f_z\),
\(f_{w_y}\), and \(f_{w_z}\) are generic functions of the covariates
\(X\). The estimand of interest is the treatment effect, represented by
the coefficient in front of \(Z\) in the equation for \(Y\), which we
denote \(\tau\).\footnote{The stable unit treatment value assumption
  (SUTVA) is also implied by this setup. In settings where the treatment
  \(Z\) is binary, we can map the structural models to the potential
  outcomes framework, where the potential outcomes for \(Y\) under
  control and treatment, which we write as \(Y(0)\) and \(Y(1)\), can be
  obtained by substituting \(Z=0\) and \(Z=1\) into the equation for
  \(Y\). As a result, the average treatment effect (ATE) is
  \(\E[Y(1) - Y(0)] = \tau\).} This can be thought of as a special case
of the fully non-parametric setting studied in
\citet{miao2018identifying}. For expositional clarity, the rest of the
paper works with the reduced model that (nonparametrically) partials out
\(X\); this also has the effect of centering the remaining variables at
zero. In a slight abuse of notation, we will use the same notation for
the variables and coefficients in the reduced model.
Figure~\ref{fig-dag} illustrates the setup and the assumptions discussed
below. We take a population view of the linear models. Thus, the `hat'
above different terms (i.e., \(\hat Z\) and \(\hat W_Y\)) is used to
denote that these are predicted values (at the population-level), not
that they correspond to sample estimates. \newline 

\emph{Remark.} Throughout the paper, we assume that \(W_Z\) and \(W_Y\)
are pre-treatment variables. In other words, we do not account for
settings when researchers are using post-treatment negative controls for
potential candidates of the treatment proxy \(W_Y\), or settings where
\(W_Z\) (i.e., the instrument or outcome proxy) is potentially
post-treatment. The framework can be adapted to accommodate these
settings, though this would require either accounting for potential
post-treatment bias from erroneously adjusting for these variables in a
selection-on-observables setting, or accounting for different sets of
covariates across the different identification strategies.

\begin{figure}[t]
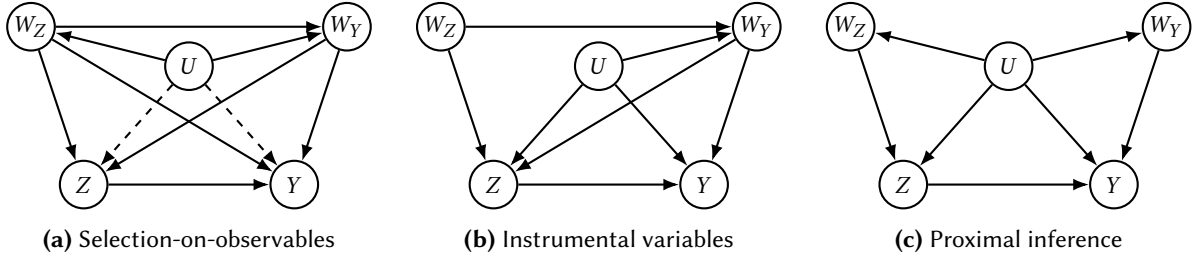


\begin{minipage}{0.33\linewidth}

\centering{

\tikzset{> = {Latex[length=6pt]},
         every path/.append style = {arrows = ->, thick},
         every node/.append style = {inner sep=0, font = \footnotesize}}
\tikz[scale=0.7]{
    \node[draw, circle, minimum size=18pt] (u) at (0, 0.25) {$U$};
    \node[draw, circle, minimum size=18pt] (z) at (-2, -2)  {$Z$};
    \node[draw, circle, minimum size=18pt] (y) at (2, -2) {$Y$};
    \node[draw, circle, minimum size=18pt] (wz) at (-3, 1) {$W_Z$};
    \node[draw, circle, minimum size=18pt] (wy) at (3, 1) {$W_Y$};
    \path[dashed] (u) edge (z);
    \path[dashed] (u) edge (y);
    \path (u) edge (wz);
    \path (u) edge (wy);
    \path (z) edge (y);
    \path (wz) edge (z);
    \path (wy) edge (wz);
    \path (wz) edge (y);
    \path (wy) edge (z);
    \path (wy) edge (y);
}

}

\subcaption{\label{fig-dag-soo}Selection-on-observables}

\end{minipage}%
\begin{minipage}{0.33\linewidth}

\centering{

\tikzset{> = {Latex[length=6pt]},
         every path/.append style = {arrows = ->, thick},
         every node/.append style = {inner sep=0, font = \footnotesize}}
\tikz[scale=0.7]{
    \node[draw, circle, minimum size=18pt] (u) at (0, 0.25) {$U$};
    \node[draw, circle, minimum size=18pt] (z) at (-2, -2)  {$Z$};
    \node[draw, circle, minimum size=18pt] (y) at (2, -2) {$Y$};
    \node[draw, circle, minimum size=18pt] (wz) at (-3, 1) {$W_Z$};
    \node[draw, circle, minimum size=18pt] (wy) at (3, 1) {$W_Y$};
    \path (u) edge (z);
    \path (u) edge (y);
    \path (u) edge (wy);
    \path (z) edge (y);
    \path (wz) edge (z);
    \path (wy) edge (wz);
    \path (wy) edge (z);
    \path (wy) edge (y);
}

}

\subcaption{\label{fig-dag-iv}Instrumental variables}

\end{minipage}%
\begin{minipage}{0.33\linewidth}

\centering{

\tikzset{> = {Latex[length=6pt]},
         every path/.append style = {arrows = ->, thick},
         every node/.append style = {inner sep=0, font = \footnotesize}}
\tikz[scale=0.7]{
    \node[draw, circle, minimum size=18pt] (u) at (0, 0.25) {$U$};
    \node[draw, circle, minimum size=18pt] (z) at (-2, -2)  {$Z$};
    \node[draw, circle, minimum size=18pt] (y) at (2, -2) {$Y$};
    \node[draw, circle, minimum size=18pt] (wz) at (-3, 1) {$W_Z$};
    \node[draw, circle, minimum size=18pt] (wy) at (3, 1) {$W_Y$};
    \path (u) edge (z);
    \path (u) edge (y);
    \path (u) edge (wz);
    \path (u) edge (wy);
    \path (z) edge (y);
    \path (wz) edge (z);
    \path (wy) edge (y);
}

}

\subcaption{\label{fig-dag-prox}Proximal inference}

\end{minipage}%

\caption{\label{fig-dag}The causal structure studied in this paper,
after partialing out any covariates \(X\). Selection-on-observables
assumes that at least one of the dashed arrows is absent.}

\end{figure}%

\subsection{Selection on observables}\label{selection-on-observables}

The ATE \(\tau\) can be estimated from Eq.~\ref{eq-struct} only if \(U\)
is observed, which it is not. The simplest way to estimate \(\tau\) in
the presence of an unobserved \(U\) is to assume that the observed
covariates \(X\) contain all relevant confounders, an assumption known
as selection-on-observables (SOO). In our model, SOO means that at least
one of the following two assumptions holds.

\begin{assump}[Unconfounded treatment]{SOO-Z}
\(Z\indep U\mid X, W_Z, W_Y\).\end{assump}

\begin{assump}[Unconfounded outcome]{SOO-Y}
\(Y\indep U\mid X, Z, W_Z, W_Y\).\end{assump}

Assumption~\ref{asm-soo-z} corresponds to \(\gamma_u=0\) in
Eq.~\ref{eq-struct}, and Assumption~\ref{asm-soo-y} corresponds to
\(\beta_u=0\). Either of these assumptions is sufficient for \(\tau\) to
be identified, and estimated via a regression on the observed variables,
with \(\tau_\soo\) the coefficient on \(Z\) in this regression. The
following standard result, whose proof is omitted, records this fact.

\begin{proposition}[SOO
validity]\protect\hypertarget{prp-soo-valid}{}\label{prp-soo-valid}

If either Assumption~\ref{asm-soo-z} or Assumption~\ref{asm-soo-y}
holds, \(\tau_\soo=\tau\).

\end{proposition}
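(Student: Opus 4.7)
The plan is to invoke the Frisch--Waugh--Lovell (FWL) theorem to reduce the three-variable population regression defining $\tau_\soo$ to a single covariance ratio, and then show that each of the two sufficient assumptions forces the residual confounding term to vanish. Let $\tilde Z$ denote the population residual from projecting $Z$ on $(W_Y, W_Z)$, so that by FWL
\[
\tau_\soo = \frac{\cov(Y, \tilde Z)}{\var(\tilde Z)}.
\]

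Next I would substitute the structural equation for $Y$ into the numerator. Because $\tilde Z$ is orthogonal to $W_Y$ and $W_Z$ by construction, and $\eps_y$ is independent of $(U, W_Y, W_Z, Z)$ by the mutual independence of the structural noise terms, the expression collapses to
\[
\tau_\soo \;=\; \tau \;+\; \beta_u\,\frac{\cov(U, \tilde Z)}{\var(\tilde Z)}.
\]
The whole proof then reduces to showing that the second summand is zero under either assumption.

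Finally, I would dispatch the two cases. Under \ref{a-soo-y}, we have $\beta_u = 0$, so $\tau_\soo = \tau$ immediately. Under \ref{a-soo-z}, we have $\gamma_u = 0$, so the structural equation for $Z$ becomes $Z = \gamma_{w_y} W_Y + \gamma_{w_z} W_Z + \eps_z$; since $\eps_z$ is independent of $(W_Y, W_Z)$, this expression is itself the linear projection of $Z$ onto $(W_Y, W_Z)$, and hence $\tilde Z = \eps_z$. Independence of $\eps_z$ from $U$ then gives $\cov(U, \tilde Z) = 0$, so again $\tau_\soo = \tau$.

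There is no serious obstacle; the only subtlety is bookkeeping the noise independences needed at each step, namely $\eps_y \indep \tilde Z$ in the reduction and $\eps_z \indep U$ in the second case. Both follow directly from the assumption that the structural noise terms are mutually independent, and from the fact that $U$, $W_Y$, and $W_Z$ are measurable functions of $(U, \eps_{w_z}, \eps_{w_y})$ alone.
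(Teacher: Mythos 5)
Your proposal is correct and follows essentially the same route the paper implicitly relies on: the Frisch--Waugh--Lovell reduction plus the structural substitution that yields $\tau_\soo = \tau + \beta_u\,\cov(U,\tilde Z)/\var(\tilde Z)$ is exactly the bias decomposition recorded in Eq.~\ref{eq-bias-soo} and Lemma~\ref{lem-soo}, where the bias factors as $\beta_u\gamma_u$ times a positive quantity, so either \ref{a-soo-y} ($\beta_u=0$) or \ref{a-soo-z} ($\gamma_u=0$, whence $\tilde Z=\eps_z\indep U$) kills it. The case analysis and the independence bookkeeping are both handled correctly.
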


In practice, the selection-on-observables assumptions cannot be tested
and are often untenable. Researchers are often constrained by the
pre-treatment covariates that are available in observational contexts.
In the presence of latent, mismeasured, or unknown confounders, relying
on a SOO identification strategy will generally result in biased
estimates.

\subsection{Instrumental Variables}\label{instrumental-variables}

Instrumental variables have long been used in contexts where researchers
believe that confounding remains after controlling for observed
covariates. Informally, this identification approach relies on on
finding a variable \(W_Z\), the \emph{instrument}, that can explain
variation in the treatment \(Z\), but can only affect the outcome \(Y\)
through its relationship with \(Z\). The variation in \(W_Z\) can then
be used to identify the treatment effect \(\tau\), even in the presence
of an unobserved confounder \(U\). While IV no longer requires
Assumptions \ref{asm-soo-z} or \ref{asm-soo-y}, identification under our
model requires that \(W_Z\) satisfy the following two assumptions.

\begin{assump}[Exogenous instrument]{IV-EXOG}
\(W_Z\indep U\mid X, W_Y\).\end{assump}

\begin{assump}[Exclusion Restriction]{IV-EXCL}
\(Y\indep W_Z\mid X, U, Z, W_Y\).\end{assump}

An instrument \(W_Z\) is considered a \emph{valid instrument} if both
exclusion restriction, as well as exogeneity, holds. In the context of
Eq.~\ref{eq-struct}, Assumption~\ref{asm-iv-exog} corresponds to
\(\varphi_u=0\) and Assumption~\ref{asm-iv-excl} corresponds to
\(\beta_{w_z}=0\). The instrument must also be \emph{relevant}, meaning
that \(\gamma_{w_z}\neq 0\), so it has some effect on \(Z\). Given a
valid instrument, the most common way to estimate the treatment effect
is via two-stage least squares, which will recover \(\tau\). In general
settings beyond the model in Eq.~\ref{eq-struct}, additional assumptions
about treatment effect homogeneity or monotonicity are required to
identify a causal effect.

\begin{proposition}[IV
validity]\protect\hypertarget{prp-iv-valid}{}\label{prp-iv-valid}

Let \(\tau_\iv\) be estimated via two-stage least squares:

\begin{enumerate}
\def\labelenumi{\arabic{enumi}.}
\tightlist
\item
  Regress \(Z\) on \(W_Z\) and \(W_Y\) and generate predicted values
  \(\hat Z\).
\item
  Regress \(Y\) on \(\hat Z\) and \(W_Y\). The least-squares coefficient
  on \(\hat Z\) is denoted \(\tau_\iv\).
\end{enumerate}

\noindent Then, under Assumptions \ref{asm-iv-exog} and
\ref{asm-iv-excl}, \(\tau_\iv=\tau\).\footnote{The derivation of this
  result relies on a homogeneous treatment effect \(\tau\), which we
  have assumed from the partial linearity of the structural model.
  Alternative identification approaches, which we do not study, replace
  homogeneity with other structural assumptions such as monotonicity.}

\end{proposition}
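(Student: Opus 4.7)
The plan is to verify 2SLS validity via the standard orthogonality argument, using Frisch--Waugh--Lovell to partial out $W_Y$ and thereby reduce the second stage to a single-instrument Wald computation. I would begin by translating the two IV assumptions into their operational implications for the structural model in Eq.~\ref{eq-struct}: Assumption \ref{a-iv-excl} sets $\beta_{w_z} = 0$, collapsing the outcome equation to $Y = \tau Z + \beta_{w_y} W_Y + \beta_u U + \eps_y$, while Assumption \ref{a-iv-exog} supplies the conditional moment $\cov(U, W_Z \mid W_Y) = 0$ that will be needed to zero out the unobserved-confounder term.

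Next, I would analyze the first stage. In population, OLS of $Z$ on $(W_Z, W_Y)$ produces $\hat Z$ as a linear combination of $W_Z$ and $W_Y$ whose residual is orthogonal to both regressors; by relevance ($\gamma_{w_z} \neq 0$), the $W_Z$-loading in $\hat Z$ is nonzero. Applying Frisch--Waugh--Lovell after partialing out $W_Y$ from both stages, the second-stage coefficient on $\hat Z$ reduces to the Wald ratio
\[
\tau_\iv = \frac{\cov(Y, W_Z \mid W_Y)}{\cov(Z, W_Z \mid W_Y)}.
\]
Since $\hat Z$ already lies in the span of $(W_Z, W_Y)$, the proposition's inclusion of $W_Z$ as an additional regressor in the second stage is absorbed by this collinearity, and the FWL reduction makes the target coefficient canonical.

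To finish, I would substitute the simplified outcome equation into the numerator of the Wald ratio to obtain
\[
\cov(Y, W_Z \mid W_Y) = \tau\,\cov(Z, W_Z \mid W_Y) + \beta_u\,\cov(U, W_Z \mid W_Y) + \cov(\eps_y, W_Z \mid W_Y).
\]
The last term vanishes because $\eps_y$ is independent of $(W_Z, W_Y)$ by the noise assumptions in Eq.~\ref{eq-struct}, and the middle term vanishes by Assumption \ref{a-iv-exog}. Dividing through by $\cov(Z, W_Z \mid W_Y)$ gives $\tau_\iv = \tau$.

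The main obstacle is pinning down exactly what Assumption \ref{a-iv-exog} delivers: what is actually needed in the last step is the \emph{conditional} moment $\cov(U, W_Z \mid W_Y) = 0$, and this does not follow from the marginal restriction $\varphi_u = 0$ alone, because $W_Y$ is a common child of $U$ and $W_Z$ in Figure~\ref{fig-dag-iv}, so conditioning on $W_Y$ generically opens a collider path between them. Once the conditional independence is invoked directly from \ref{a-iv-exog}, the remaining algebra is routine.
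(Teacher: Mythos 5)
Your proof is correct and follows essentially the same route the paper takes: Proposition~\ref{prp-iv-valid} is the zero-violation special case of the Frisch--Waugh--Lovell decomposition in the proof of Theorem~\ref{thm-iv}, where the second-stage coefficient on $\hat Z$ after partialing out $W_Y$ collapses to exactly your Wald ratio $\cov(Y^{\bot W_Y}, W_Z^{\bot W_Y})/\cov(Z^{\bot W_Y}, W_Z^{\bot W_Y})$ and the two residual terms are killed by $\beta_{w_z}=0$ and $\cor(W_Z^{\bot W_Y}, U^{\bot W_Y})=0$. Your closing caveat is well taken and matches what the paper actually does: it invokes Assumption \ref{a-iv-exog} directly as the conditional independence $W_Z \indep U \mid W_Y$ to set that partial correlation to zero, rather than deriving it from $\varphi_u=0$, which, as you observe, would not suffice on its own because conditioning on the collider $W_Y$ (a common child of $U$ and $W_Z$) can induce a nonzero partial covariance even when $\cov(U,W_Z)=0$.
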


As Section~\ref{sec-intro-ex} describes, in our running example from
\citet{hk2022poland}, the authors used the number of local Catholic
priests who were secretly cooperating with the SB as an instrumental
variable. They show the number of corrupted priests is strongly
correlated with the number of SB officers, and justify both IV
identification assumptions using both substantive knowledge and
auxiliary data. However, like all identification approaches, there still
remain existing mechanisms that could violate either assumption. For
example, if the Catholic Church reserved assignments to important
cities, such as those with a history of protest or surveillance, for
priests that they judged to be more resistant to recruitment, then the
exogeneity assumption would no longer hold. Similarly, the authors note
that the possibility of indoctrination of parishioners by corrupted
priests would result in a violation of exclusion restriction. We provide
more discussion in Section~\ref{sec-app-covar}. All in all, like many
other real-world settings, the complex social dynamics in
\citet{hk2022poland} make it unlikely that both IV assumptions hold
exactly, and warrant investigation of the sensitivity of the estimates
to violations of either assumption.

\subsection{Proximal Inference}\label{proximal-inference}

While IV allows researchers to relax the selection-on-observables
assumptions in exchange for two instrumental variables assumptions,
finding a valid instrument can be challenging. In particular, an
instrument \(W_Z\) must explain variation in \(Z\), but be exogenous.
Recently, \citet{miao2015identification} proposed an alternative
identification strategy, \emph{proximal inference}, which relaxes the IV
assumption of exogeneity.

Proximal inference relies on finding two `proxy variables': (1) an
outcome proxy \(W_Y\) that is unrelated to the treatment but is related
to the outcome, and (2) a treatment proxy \(W_Z\) that is unrelated to
the outcome, but is related to the treatment. The treatment proxy is
similar to an instrumental variable, but does not need to be exogenous.
As such, proximal inference can be viewed as an alternative to
instrumental variables, in settings when researchers are not confident
they have an exogenous instrument.

More formally, the proximal inference approach of
\citet{miao2015identification} replaces SOO or IV assumptions with two
assumptions about the proxy variables.

\begin{assump}[Outcome proxy unrelated to treatment]{PROX-Y}
\(W_Y\indep Z, W_Z\mid X, U\).\end{assump}

\begin{assump}[Treatment proxy unrelated to outcome]{PROX-Z}
\(W_Z\indep Y\mid X, U, Z, W_Y\).\end{assump}

When both Assumptions \ref{asm-prox-y} and \ref{asm-prox-z} hold, we say
that \(W_Y\) and \(W_Z\) are valid proxy variables. In the context of
Eq.~\ref{eq-struct}, Assumption~\ref{asm-prox-y} corresponds to
\(\gamma_{w_y}=0\) and \(\alpha_{w_z}=0\), and
Assumption~\ref{asm-prox-z} means \(\beta_{w_z}=0\). Notice that
Assumption~\ref{asm-prox-z} is exactly the same as
Assumption~\ref{asm-iv-excl}.

Like instrumental variables, a two-stage least squares procedure can be
used to estimate the treatment effect. However, instead of regressing
the treatment indicator \(Z\) with the instrument \(W_Z\), the first
stage relies on regressing the outcome proxy \(W_Y\) with \(Z\) and
\(W_Z\). Because \(W_Y\) has no direct causal connection with \(Z\) and
\(W_Z\) under Assumption~\ref{asm-prox-y}, any correlation between the
two sets of variables can only be due to the unobserved confounder
\(U\), and thus the first stage amounts to estimating a proxy for \(U\).
That estimated \(\hat W_Y\) is then included in the second-stage, along
with \(Z\). When both \(W_Y\) and \(W_Z\) are valid proxies, the
estimated treatment effect under proximal inference (denoted
\(\tau_{\prox}\)) will recover the average treatment effect \(\tau\).

\citet{hk2022poland} did not originally run a causal analysis under
proximal assumptions. Yet, as described above, both SOO and IV
identification strategies face plausible challenges to their validity,
which proximal identification might address. The treatment proxy \(W_Z\)
is the count of corrupted priests, corresponding to the instrument under
IV. For the outcome proxy \(W_Y\), we argue that the indicator for
former Russian occupation is a reasonable candidate. Hager and Krakowski
explain that Russia implemented ``russification'' policies in these
areas ``which were intended to destroy Polish communities and
identities,'' in turn making protest against the Polish regime less
likely. A mechanism for affecting the amount of surveillance is less
obvious; the authors suggest in passing that Russia's experience in the
occupied territories may have made administering spy networks easier,
but we find this mechanism less plausible. Indeed, the occupation
variable has a partial correlation with the group protest outcome about
three times larger than the correlation with the treatment, after
controlling for other covariates, and a ten-times larger correlation
with the individual protest outcome than treatment.

Proximal identification is no panacea, however. Valid inference still
hinges on the exclusion restriction holding, and the outcome proxy must
satisfy two additional exclusion restrictions. Specifically, Russian
occupation cannot directly affect the number of corrupted priests or the
amount of surveillance. While plausible for some of the reasons Hager
and Krakowski outline in their discussion of the exogeneity of the
instrument, it is not unreasonable to assume that the same weakening of
Polish communities in occupied areas might affect which priests are
assigned and their success in recruiting informants. Thus, as with the
other two identification strategies, it remains important to examine the
sensitivity of estimates to violations of the identifying assumptions.

\subsection{Summary and illustration on running
example}\label{summary-and-illustration-on-running-example}

Table \ref{tbl-id} summarizes the three identification approaches, their
assumptions, and the regressions used in estimation. It also displays
the estimated effect of surveillance on individual protests in our
running example, under each estimation approach. The scale of the
effects is difficult to interpret due to several layers of rescaling and
pre-processing in the underlying data. Here, we have rescaled both the
outcome and treatment variables to have unit variance, to ease
comparison and discussion of the estimates. Thus a value of \(0.3\)
below means a one-s.d. increase in surveillance is associated with an
\(0.3\)-s.d. increase in protest.\footnote{The same estimate corresponds
  to a value of \(0.068\) in the original paper.}

Like \citet{hk2022poland}, we find opposite signs for the effect on
group and individual protest. Notice that the estimates are all
statistically significant at traditional levels. One immediately
apparent aspect of the estimates in Table \ref{tbl-id} is that the three
approaches disagree on the effect of individual protest.\footnote{The
  authors additionally analyzed an outcome of `group protests', which is
  a count of protests organized by the Solidarność (Solidarity)
  movement. We provide the additional analysis in the Appendix.}

\setstretch{1.0}

\begin{table}

\caption{\label{tbl-id}Summary of identification approaches, with estimates from the running example. Sample size is 216 for the group outcome and 206 for the individual outcome due to missing outcome data.}

\centering{

\centering
\begin{tabular}{>{\raggedright\arraybackslash}p{2.7cm}p{4cm}p{4cm}p{4cm}} \toprule
\textbf{} & \textbf{SOO} (Figure \ref{fig-dag-soo}) & \textbf{IV} (Figure \ref{fig-dag-iv}) & \textbf{Proximal} (Figure \ref{fig-dag-prox}) \\
\midrule
Identification assumptions
  & Either \ref{asm-soo-z} \textit{or} \ref{asm-soo-y}
  & \ref{asm-iv-exog} and \ref{asm-iv-excl}
  & \ref{asm-prox-z} and \ref{asm-prox-y} \\

Implied coefficients
  & $\beta_u \gamma_u = 0$
  & $\beta_{w_z}=\varphi_u=0$
  & $\gamma_{w_y}=\phi_{w_y}=\beta_{w_z}=0$ \\

Estimation
  & $Y \sim Z + W_Z + W_Y$
  & \begin{tabular}[t]{@{}l@{}}
     TSLS: \\
     1. $Z \sim W_Z + W_Y$ \\
     2. $Y \sim \hat{Z} + W_Y$
    \end{tabular}
  & \begin{tabular}[t]{@{}l@{}}
     TSLS: \\
     1. $W_Y \sim Z + W_Z$ \\
     2. $Y \sim Z + \hat{W_Y}$
    \end{tabular} \\


Estimate (s.e.) for individual protest
  & $-0.296\ \ (0.051)$
  & $-0.805\ \ (0.108)$
  & $-0.627\ \ (0.068)$ \\ \bottomrule
\end{tabular}

}

\end{table}%

\setstretch{\papersp}

\section{Bias decompositions}\label{sec-bias}

This section presents decompositions of the bias that arises from
violations in any of the underlying identification assumptions. We find
that identification strategies are more sensitive to violations of their
assumptions when there is more unobserved confounding. In other words,
violations in Assumption~\ref{asm-soo-y} will \emph{amplify} any
violations in the alternative identification assumptions.

The results here are derived for the population estimates. Because our
model (Eq.~\ref{eq-struct}) is linear, however, they hold equally when
all coefficients are interpreted as their finite-sample versions.
Throughout the paper, we follow the notation in
\citet{cinelli2020making} and use superscripted \(\perp\) to indicate
partialling out, such that for any variables \(A\), \(B\), and \(C\),
\(A^{\perp B, C} := A - \E[A\mid B, C].\) We define
\(\sd(A) := \sqrt{\Var(A)}\) and
\(\cor(A, B) := \mathrm{Cov}(A, B)/(\sd(A)\sd(B))\). We make use of
\emph{partial correlations},
\(R_{A\sim B\mid C} := \cor(A^{\perp C}, B^{\perp C})\), and the
resulting \emph{partial \(R^2\) values} (i.e.,
\(R^2_{A\sim B\mid C} = \cor^2(A^{\perp C}, B^{\perp C})\)). Finally, we
will use \hlsoo{blue} to highlight parameters related to
selection-on-observables assumptions, \hliv{pink} for parameters related
to instrumental variables assumptions, and \hlprox{brown} for parameters
related to proximal inference assumptions.

\subsection{Review: Bias of selection-on-observables
estimator}\label{sec-bias-soo}

For selection-on-observables to be a valid identification approach,
there cannot be an omitted confounder \(U\) that affects both the
outcome \(Y\) and the treatment assignment process \(Z\). In the
presence of an omitted \(U\), the bias for \(\tau_\soo\) will be a
function of \(U\)'s relationship with \(Y\) and \(Z\), respectively.
This can be expressed in terms of the coefficients in
Eq.~\ref{eq-struct} as
\begin{equation}\protect\phantomsection\label{eq-bias-soo}{
\bias(\tau_\soo) = \hlsoo{\beta_u} \cdot \hlsoo{\gamma_u} \cdot \frac{\var(U^{\bot W_Z, W_Y})}{\var(Z^{\bot W_Z, W_Y})}.
}\end{equation} Lemma~\ref{lem-soo} contains the full derivation.

Consider a potential confounder of \emph{state capacity} in the running
example. Then, \(\hlsoo{\beta_u}\) represents the relationship between
state capacity and the outcome (either group or individual protest) that
is not already captured by the observed proxies of state capacity (i.e.,
number of schools). Similarly, \(\hlsoo{\gamma_u}\) represents the
residual imbalance of state capacity across the treatment and control
groups, after controlling for the observed covariates. If researchers
believe that proxies of state capacity sufficiently capture the
variation in state capacity, then we expect \(\hlsoo{\beta_u}\) and
\(\hlsoo{\gamma_u}\) to be relatively small in magnitude, and little
bias to result for not controlling for the latent variable of state
capacity. In contrast, researchers could be worried that the measurement
error in using the number of schools as a proxy of state capacity could
be related to protests, in which case, we might expect there to be more
bias from omitting a variable like state capacity.

Importantly, Eq.~\ref{eq-bias-soo} formalizes the well-studied result
that if omitted confounder \(U\) is related to \emph{both} the outcome
process \emph{and} the treatment assignment process, there will be bias
\citep{imbens2003sensitivity}. Even if \(U\) is imbalanced between the
treatment and control groups, if it does not explain variation in the
outcome, then omitting it will not lead to bias, since \(\beta_u = 0\).
Similarly, if \(U\) explains variation in the outcome but does not
explain variation in the treatment assignment process, then
\(\gamma_u = 0\), while it may be advantageous to control for \(U\) for
precision, doing so is not strictly necessary to eliminate bias.

\subsection{Bias of instrumental variables}\label{sec-bias-iv}

Theorem~\ref{thm-iv} presents a decomposition of the bias of the
instrumental variables estimate. The bias decomposition highlights that
there are dependencies between the bias of the \(\tau_\iv\) and the
selection-on-observables assumptions. In other words, even though
\(\tau_\iv\) relies on different identification assumptions as
\(\tau_\soo\), the bias in \(\tau_\iv\) can be amplified by violations
in the selection-on-observables assumptions. Appendix~\ref{sec-app-exog}
further discusses the special case when researchers are willing to
assume exogeneity of the instrument holds exactly, and wish to examine
sensitivity to the exclusion restriction.

\begin{theorem}[Bias of IV
estimate]\protect\hypertarget{thm-iv}{}\label{thm-iv}

The bias of \(\tau_\iv\) is given by
\begin{equation}\protect\phantomsection\label{eq-iv-bias-r2}{
\begin{aligned}
\bias(\tau_{\iv}) &= \underbrace{\frac{1}{R_{Z\sim W_Z\mid W_Y}} \cdot
    \frac{\sd(W_Z^{\bot W_Y})}{\sd(Z^{\bot W_Y})}}_{\text{Observable Scaling Factor}} \left\{\hliv{\beta_{w_z}}  + \hlsoo{\beta_u}  \cdot \hliv{\varphi_u} \right\}
\end{aligned}
}\end{equation} where \(\beta_{w_z}, \beta_u, \varphi_u\) correspond to
the parameters in Eq.~\ref{eq-struct}.

\end{theorem}

The bias in Eq.~\ref{eq-iv-bias-r2} depends on three key terms: (a) a
scaling factor depends on the strength of the instrument, (b) bias from
violations in the exclusion restriction, as captured by
\(\hliv{\beta_{w_z}}\), and (c) the bias from violations in instrument
exogeneity, captured by the term
\(\hlsoo{\beta_{u}} \cdot \hliv{\varphi_u}\).

The first term is a scaling factor, which is comprised of fully
observable quantities. While the scaling factor does not depend on
violations in the underlying IV assumptions, it will amplify any
potential violations. Notably, the scaling factor depends on the
strength of the underlying instrumental variable as measured by
\(R_{Z\sim W_Z\mid W_Y}\). In settings when there is a weak instrument,
\(R_{Z\sim W_Z\mid W_Y}\) will be low, and even in the presence of very
weak violations in the underlying assumptions, the overall bias incurred
by an IV estimator will be relatively large. This is on top of any
inferential challenges, such as high variance and non-normality, that
arise from weak instruments even when the IV assumptions hold exactly
\citep{andrews2019weak}.

The second term is \(\hliv{\beta_{w_z}}\), which measures how related
the instrument \(W_Z\) is to the outcome \(Y\), after conditioning on
\(\{W_Y, Z, U\}\). This is a direct representation of violations in the
exclusion restriction.

Finally, the last term represents the bias incurred from violations of
exogeneity and is represented by the the product between
\(\hlsoo{\beta_{u}}\) and \(\hliv{\varphi_u}\). \(\hliv{\varphi_u}\)
represents the residual variation in \(W_Z\) that can be explained by
the omitted confounder \(U\), after controlling for observed covariates,
and directly maps to the violation of Assumption~\ref{asm-iv-exog}.
Interestingly, \(\hliv{\varphi_u}\) is scaled by \(\hlsoo{\beta_u}\),
which corresponds to violations in the underlying SOO assumptions. We
can informally think of the SOO parameters as controlling the overall
`difficulty' of the identification problem. When \(U\) is a relatively
weak confounder such that \(\hlsoo{\beta_u}\) is close to zero, then
there will be less exposure to bias for \emph{both} \(\tau_\soo\) and
\(\tau_\iv\). In this setting, \(\tau_\iv\) will be more robust to
violations of exogeneity. Conversely, more outcome confounding directly
increases the effect of violations of exogeneity.

\subsection{Bias of proximal estimate}\label{sec-bias-prox}

Theorem~\ref{thm-prox-bias} presents a decomposition of the bias of the
proximal estimate. Like the results in Section~\ref{sec-bias-iv}, the
bias of \(\tau_\prox\) will implicitly depend on not only the degree to
which the proximal assumptions are violated, but also the
selection-on-observables and IV assumptions.

\begin{theorem}[Bias of the proximal
estimator]\protect\hypertarget{thm-prox-bias}{}\label{thm-prox-bias}

The bias of \(\tau_\prox\) is: \[
\bias(\hat \tau_\prox)= \hliv{\beta_{w_z}}\cdot S_{W_Z, Z}\frac{\sd(W_Z^{\bot \hat W_Y})}{\sd(Z^{\bot \hat W_Y})}- \frac{\hlsoo{\beta_u}}{\alpha_u} \left( \hlprox{\gamma_{w_y}} + \gamma_{w_z} \hlprox{\varphi_{w_y}} \right) \frac{ \var(W_Y^{\bot \hat W_Y}) - \alpha_u^2 \var(U^{\bot \hat W_Y})}{ \var(Z^{\bot \hat W_Y})}
\] where
\(S_{W_Z, Z}\coloneq \cor(W_Z^{\perp \hat W_Y}, Z^{\perp \hat W_Y})\) is
either \(1\) or \(-1\).

\end{theorem}

The theorem formalizes that the bias of a proximal estimator is going to
be made up of two parts: the first corresponds to the bias from an
invalid treatment proxy, and the second corresponds to bias from an
invalid outcome proxy. Both sources of bias are scaled by values that
corespond to the strength of the respective proxies.

To start, the first driver of bias results from an invalid treatment
proxy, as measured by \(\hliv{\beta_{w_z}}\). This parameter is
identical to the parameter from the IV estimator, which corresponded to
the degree to which the exclusion restriction has been violated. This
clarifies that while proximal estimators allow researchers to relax the
exogeneity assumption associated with an instrumental variable,
violations of the exclusion restriction impact both approaches
similarly. This term is scaled by a term that depends on how much
variation in the constructed \(\hat W_Y\) can be attributed to both
\(W_Z\) and \(Z\).

The second driver of bias arises from an invalid outcome proxy. A valid
outcome proxy requires that \(W_Y\) is not related to both \(Z\) and
\(W_Z\), after conditioning on \(U\) (and other pre-treatment
covariates). This is represented by \(\gamma_{w_y}\) and
\(\varphi_{w_y}\), which measure how related \(W_Y\) is to \(Z\) and
\(W_Z\), respectively, controlling for both \(U\) and other
pre-treatment covariates. Violations from an invalid outcome proxy are
amplified by a scaling factor. The scaling factor depends primarily on
the strength of \(W_Y\) as an outcome proxy as measured by \(\alpha_u\)
and the SOO parameter \(\hlsoo{\beta_u}\). Analogously to IV, when \(U\)
strongly affects the outcome, there is a greater sensitivity to bias
from an invalid outcome proxy.

\subsection{Scale-invariant representations of
bias}\label{scale-invariant-representations-of-bias}

We can equivalently re-write the bias expressions derived in
Theorem~\ref{thm-iv} and Theorem~\ref{thm-prox-bias} in terms of partial
correlations instead of regression coefficients and residual variances.
Doing so results in a representation of the bias that is invariant to
the scale of the underlying variables. In Appenidx A.4, we present the
bias expressions in terms of partial correlations. Table \ref{tbl-r2}
provides a summary of the different coefficients and the corresponding
partial \(R^2\) values.

\begin{table}

\caption{\label{tbl-r2}Summary of different identification assumptions, the corresponding coefficients in the structural model, and the partial $R^2$ terms that measure the degree of violation of each assumption.}

\centering{

\centering 
\begin{tabular}{>{\raggedright\arraybackslash}p{4cm}p{2cm}p{2cm}p{6.5cm}} \toprule
Assumption & $R^2$ & Coefficient & Interpretation \\ \midrule 
\multicolumn{4}{l}{\textbf{Selection-on-observables}} \\
Selection-on-observables & $\hlsoo{R_{Z \sim U \mid W_Z, W_Y}}$ & $\hlsoo{\gamma_u}$ & $U$'s relationship with the treatment \\
&  $\hlsoo{R_{Y \sim U \mid Z, W_Z, W_Y}}$ & $\hlsoo{\beta_u}$ & $U$'s relationship with the outcome \\ \midrule 
\multicolumn{4}{l}{\textbf{Instrumental variables}} \\
Exclusion restriction & $\hliv{R_{Y \sim W_Z \mid Z, W_Y, U}}$ & $\hliv{\beta_{w_z}}$  & Relationship between instrument and outcome, controlling for $Z$, $W_Y$, and $U$ \\
Exogeneity & $\hliv{R_{W_Z \sim U \mid W_Y}}$ & $\hliv{\varphi_u}$ & Relationship between instrument and $U$, controlling for $W_Y$ \\\midrule 
\multicolumn{4}{l}{\textbf{Proximal inference}} \\ 
Outcome proxy & $\hlprox{R_{Z \sim W_Y \mid W_Z, U}}$ & $\hlprox{\gamma_{w_y}}$ & Relationship between outcome proxy and treatment, controlling for $W_Z$ and $U$ \\
& $\hlprox{R_{W_Z \sim W_Y \mid U}}$ & $\hlprox{\varphi_{w_y}}$ & Relationship between outcome proxy and treatment proxy, controlling for $U$ \\
Treatment proxy & $\hliv{R_{W_Z \sim W_Y \mid Z, U}}$ & $\hlprox{\varphi_{w_y}}$ & Relationship between treatment proxy and outcome proxy, controlling for $Z$ and $U$\\ \bottomrule 
\end{tabular} 

}

\end{table}%

While re-writing the bias in terms of partial correlation terms allows
for a representation of the bias that is represented by scale-invariant
parameters, the bias expressions are challenging to reason about in
practice. The bias results in the previous section, Theorem~\ref{thm-iv}
and Theorem~\ref{thm-prox-bias}, highlight how violations of the
different identifying assumptions interact and lead to bias in estimates
of \(\tau\). This is further exacerbated by the \(R^2\) representations.
In particular, these coefficients and variances are not
variation-independent---the range of possible values for a coefficient
depends on the values of other coefficients. Additionally, it is
difficult to make comparisons \emph{across} identification strategies,
because each bias expression is parametrized in terms of a different set
of coefficients and variances.

In the following section, we address these limitations by
reparameterizing the problem in terms of four unitless partial
correlations, three of which directly measure the violation of certain
identifying assumptions.

\section{Comparing the sensitivity of identification
approaches}\label{sec-sens-comp}

In this section, we introduce a variation independent parameterization
of the bias for all three of the identification strategies. We show that
the variation independent parameterization can be directly mapped back
to the partial \(R^2\) values that measure the degree of violation of
the different identifying assumptions. Doing so allows us to define a
numerical summary measure called the \emph{robustness value} that
quantifies the minimum violation of the identifying assumptions needed
to result in a substantively meaningful change in the result. The
proposed measure generalizes the robustness value introduced in
\citet{cinelli2020making} to the IV and proximal case, and allows
researchers to compare the sensitivities across the different
identification approaches. To aid in the interpretation of the RV, we
propose a benchmarking approach, which allows researchers to reason
about the plausibility of potential violations in their underlying
identification assumptions.

\subsection{Re-parameterizing the bias using the population covariance
matrix}\label{re-parameterizing-the-bias-using-the-population-covariance-matrix}

Any approach to comparing identification strategies must require
parametrizing the estimates, the bias, and the identifying assumptions
on a common space. To start, because each estimator is simply the
coefficient from a linear regression, we can write each estimator in
terms of the population covariance matrix \(\Sigma \in \R^{7 \times 7}\)
of \((Y, Z, W_Z, W_Y, \hat Z, \hat W_Y, U)\): \[
\tau_\soo =  \Sigma_{zw_zw_y,zw_zw_y}^{-1}\Sigma_{zw_zw_y,y}, \quad
\tau_\iv =   \Sigma_{\hat zw_y,\hat zw_y}^{-1}\Sigma_{\hat zw_y,y}, \qand
\tau_\prox = \Sigma_{z\hat w_y,z\hat w_y}^{-1}\Sigma_{z\hat w_y,y},
\] where subscripts here indicate the submatrix of \(\Sigma\) that
corresponds to the variables in the subscript; i.e.,
\(\Sigma_{zw_zw_y,y}\) is the 3-by-1 submatrix containing the covariance
of \((Z, W_Z, W_Y)\) with \(Y\). Similarly, the true effect \(\tau\) can
be expressed as
\(\tau = \Sigma_{zw_zw_yu,zw_zw_yu}^{-1}\Sigma_{zw_zw_yu,y}\).

While the different estimators do not depend on \(U\), \(\tau\) depends
on \(U\). Because the observed data fixes most of the matrix \(\Sigma\),
only the submatrix \(\Sigma_{u,yzw_zw_y}\) is unidentified.\footnote{
  Because \(\hat Z\) and \(\hat W_y\) are deterministic functions of
  observed variables only, the submatrix \(\Sigma_{u,\hat z\hat w_y}\)
  will depend deterministically on the other observed entries as well as
  \(\Sigma_{u,yzw_zw_y}\), leaving only the submatrix
  \(\Sigma_{u,yzw_zw_y}\) as being truly unidentified.} Thus, another
view of sensitivity analysis is to vary the entries in the submatrix
\(\Sigma_{u,yzw_zw_y}\) to evaluate how \(\tau\) changes.

The four entries in \(\Sigma_{u,yzw_zw_y}\) are not unconstrained---they
must be such that \(\Sigma\) is positive semidefinite. Without loss of
generality, we can set \(\Sigma_{uu}=1\), since the scale of the
confounder does not affect any observables. We then parametrize
\(\Sigma_{u,yzw_zw_y}\) by a set of partial correlations of \(U\) and
each observable variable: \[
\rho \coloneq (R_{U\sim W_Y}, \hliv{R_{U\sim W_Z\mid W_Y}},  \hlsoo{R_{U\sim Z\mid W_Y,W_Z}}, \hlsoo{R_{U\sim Y\mid Z,W_Z,W_Y}}).
\] Notice that the latter two partial correlations are exactly the SOO
sensitivity parameters, and the second,
\(\hliv{R_{U\sim W_Z\mid W_Y}}\), measures the violation of
Assumption~\ref{asm-iv-exog}. The first parameter, \(R_{U\sim W_Y}\),
measures the strength of the outcome proxy, but does not directly
measure the violation of any identifying assumption.

Critically, this choice of partial correlations admits a bijective
mapping to the Cholesky decomposition of \(\Sigma\) and thus to
\(\Sigma\) itself \citep{lewandowski2009generating}. As a result, unlike
the entries in \(\Sigma_{u,yzw_zw_y}\), these parameters are variation
independent: any values of \(\rho\in[-1, 1]^4\) corresponds to a valid
covariance matrix \(\Sigma(\rho)\).

The mapping \(\rho\mapsto\Sigma\) fixes the true effect \(\tau\). It
also determines the value of the various partial correlations that
measure how much each identifying assumption is violated. In other
words, we can re-express each of the partial correlation values that
correspond to the different violations in identifying assumptions as
functions of \(\rho\). For example, the violation of the exclusion
restriction corresponds to the value of
\(\hliv{R_{Y\sim W_Z\mid Z,W_Y,U}}\), and can be written as \[
\hliv{R_{Y\sim W_Z\mid Z,W_Y,U}}(\rho) = \frac{\Sigma(\rho)^{-1}_{y,w_z}}{
\sqrt{\Sigma(\rho)^{-1}_{y,y}\Sigma(\rho)^{-1}_{w_z,w_z}}}.
\] Thus, even though the exclusion restriction violation
\(\hliv{R_{Y\sim W_Z\mid Z,W_Y,U}}\) does not appear anywhere in
\(\rho\), unlike the SOO sensitivity parameters, it is still related
nonlinearly to the SOO parameters and the other entries in \(\rho\), due
to the common parametrization.

\subsection{Robustness Value}\label{robustness-value}

The reparametrization of the bias in terms of \(\rho\) enables the
definition of the \emph{robustness value}, one measure of the
sensitivity of an estimate to violations of identifying assumptions. The
robustness value was first introduced by \citet{cinelli2020making} as
the minimum violation of the identifying assumptions ``to change the
research conclusions.''

The robustness value depends on an analyst-chosen bias threshold \(b\),
representing the smallest change in a causal estimate so that is
considered meaningful.\footnote{ So that, for example, a true value of
  \(\tau=\tau_\iv-b\) is considered substantively different to the
  causal estimate \(\tau_\iv\) actually obtained. A common choice is
  \(b\) equal to the estimate itself, or the estimate plus or minus two
  standard errors, and thus large enough to reverse the sign of the
  causal effect. Alternatively, \(b\) could be chosen to be equal to
  some multiple of the estimate's standard error. For a Normal sampling
  distribution, a bias of one standard error reduces the coverage of
  nominally 95\% confidence intervals to around 80\%; a two-s.e. bias
  reduces coverage to around 50\%. This is similar to the notion of a
  `killer confounder', first presented in \citet{huang2024sensitivity}
  and \citet{hartman2024sensitivity}.} To begin, for a fixed \(\rho\),
we define the vector of parameters associated with each identification
strategy as follows: \[
\begin{aligned}
A_\soo(\rho) &= \qty( \hlsoo{R_{Z\sim U\mid W_Z,W_Y}(\rho)},~\hlsoo{R_{Y\sim U\mid Z,W_Z,W_Y}(\rho)} ) \\
A_\iv(\rho) &= \qty( \hliv{R_{W_Z\sim U\mid W_Y}(\rho)},~\hliv{R_{Y\sim W_Z\mid Z,W_Y,U}(\rho)} ) \qand \\
A_\prox(\rho) &= \qty( \hliv{R_{Y\sim W_Z\mid Z,W_Y,U}(\rho)},~\hlprox{R_{Z\sim W_y\mid  W_Z,U}(\rho)},~ \hlprox{R_{W_Z\sim W_Y\mid U}(\rho)} ).
\end{aligned}
\] Because the SOO sensitivity parameters are part of \(\rho\) itself,
\(A_\soo\) is simply comprised of the last two entries in \(\rho\);
however, \(A_\iv\) and \(A_\prox\) are nonlinear functions of \(\rho\).
While the bias of both the IV and proximal estimators additionally
depend on other partial correlations, \(A_\iv(\rho)\) and
\(A_\prox(\rho)\) represent the partial correlations that directly map
to the underlying identification assumptions.

We can then formally define the \emph{robustness value} (RV) as the
minimum bound on the violations of the identifying assumption that is
sufficient to guarantee absolute bias no more than \(b\):
\begin{equation}\protect\phantomsection\label{eq-rv}{
\begin{aligned}
\TRV_\soo(b) &= \inf \big\lvert\big\lvert
    A_\soo(\{\rho\in P\,:\, \bias(\tau_\soo(\rho))=b \})
    \big\rvert\big\rvert^2_\infty, \\
\TRV_\iv(b) &= \inf \big\lvert\big\lvert
    A_\iv(\{\rho\in P\,:\, \bias(\tau_\iv(\rho))=b \})
    \big\rvert\big\rvert^2_\infty, \\
\TRV_\prox(b) &= \inf \big\lvert\big\lvert
    A_\prox(\{\rho\in P\,:\, \bias(\tau_\prox(\rho))=b \})
    \big\rvert\big\rvert^2_\infty.
\end{aligned}
}\end{equation} Here, \(P\subseteq [-1,1]^4\) is a possibly restricted
set of allowable values of \(\rho\), which we discuss below.

Eq.~\ref{eq-rv} captures the notion that violations smaller than the RV
guarantee bias smaller than the critical threshold. The robustness value
measures a kind of safety guarantee: \emph{if} all of the assumptions
are violated less than the RV, as measured by the partial \(R^2\)
values, \emph{then} the bias will be smaller than \(b\). A
\emph{smaller} robustness value means that a smaller violation of the
identifying assumptions could be sufficient to result in a threshold
bias of \(b\). In contrast, a larger robustness value means that a
larger violation of the identifying assumptions is needed to obtain the
same bias.

Because of the common parametrization, it is possible to compare the
robustness values of different identification approaches. In an extreme
case, if the RV for one identification strategy is very small, while it
relatively large for the others, then this means that a smaller
violation in the assumptions of that strategy could result in a
substantial amount of bias; this can warrant caution in evaluating the
credibility of the estimate from that strategy. However, each RV
measures violations of different assumptions, and so when RV values are
closer, direct comparisons may be more difficult. We develop a
benchmarking procedure for the RV based on observed covariates in
Section~\ref{sec-benchmarking} to aid in these comparisons.

The proposed robustness value generalizes the robustness value defined
in \citet{cinelli2020making}, which was introduced for the
selection-on-observables setting.\footnote{\citet{cinelli2020making}
  define the RV through a specific formula, but motivate it in the way
  defined here, and for the SOO setting that they studied, their
  explicit definition and the implict one in Eq.~\ref{eq-rv} coincide.}
However, some properties of the RV for SOO do not carry over to other
strategies. In the SOO setting, the solution to the minimization problem
in Eq.~\ref{eq-rv} requires that the two SOO sensitivity parameters (the
last two components of \(\rho\)) be equal. That may not be the case with
the RV when applied to IV or proximal strategies. Another difference is
that the SOO bias is monotonically increasing in each parameter. For IV
and proximal, however, the addition and subtraction in their bias
expressions means it is possible for assumption violations to cancel
out. Thus, if an assumption is violated more than \(\TRV(b)\), there is
no guarantee that the bias will exceed \(b\) for IV and proximal,
depending on the signs of the partial correlations.

One difficulty with the RV for IV and proximal is the behavior of the
bias when treatment confounding
\(\rho_3 (i.e., R^2_{Z\sim U\mid W_Z,W_Y}\)) is very high (near 1). In
particular, when \(\rho_3\) is near 1, the bias of both IV and proximal
can be extremely sensitive to small changes in the other parameters,
even if those parameters are relatively small in magnitude. In other
words, any miniscule change in other parameters in \(\rho\) result in
extreme changes in the value of \(\tau\). This can be seen visually in a
traditional sensitivity contour plot where the contours of bias are
extremely dense around the point \((1, 0)\). The high derivative of the
bias in this region may lead to smaller robustness values. We show this
pattern in Figure~\ref{fig-rv-contour} for our running example by
carrying out the optimization in Eq.~\ref{eq-rv} while fixing \(\rho_3\)
to different values along its possible range. The RV for SOO is
minimized at an intermediate value of \(\rho_3\)---in fact, exactly
where \(\rho_3=\rho_4\). However, the RV for IV and proximal tends to 0
as \(\rho_3\to 1\). This also highlights that in settings when there is
a large degree of confounding in the treatment assignment process, the
bias of IV and proximal estimates will be extremely sensitive to any
violations in their underlying identification assumptions.

This motivates restricted choices of \(P\), the allowable values of
\(\rho\) in Eq.~\ref{eq-rv}. Restricting \(\rho_3\) to values away from
this region avoids this pathology. As a default, we recommend
restricting \(\rho_3^2<0.95\), as in most applied settings it is
unlikely that a single confounder could explain more than 95\% of
treatment variation. This restriction still allows for a large degree of
confounding in the treatment assignment process, while avoiding the
region where the bias is extremely sensitive to small changes in
\(\rho\). Fortunately, we have found in examining plots like
Figure~\ref{fig-rv-contour} for other outcomes, and for other DGPs
generated by samping \(\Sigma\) uniformly from the set of all
correlation matrices, that patterns like the one in
Figure~\ref{fig-rv-contour} are common: while there is variation in the
RV for IV and proximal across values of \(\rho_3\), the relative
positions of the RV values for the different identification strategies
are relatively stable. Researchers are unlikely to incorrectly conclude,
for instance, that their IV estimate is more or less robust than their
SOO estimate due to a restriction on \(\rho_3\).

\begin{figure}

\centering{

\pandocbounded{\includegraphics[keepaspectratio]{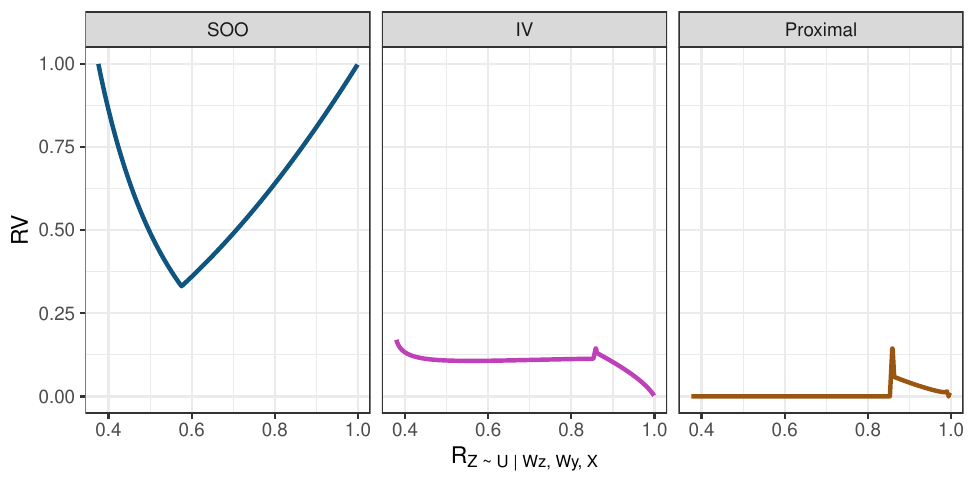}}

}

\caption{\label{fig-rv-contour}Robustness value by identification
strategy for running example as \(\rho_3 = R_{Z\sim U\mid W_Z, W_Y}\) is
constrained to different values along its possible range. The overall
robustness value is the minimum along each curve. The parameter
\(\rho_4=R_{Y\sim U\mid Z, W_Z, W_Y}\) decreases monotonically from 1 to
0 as \(\rho_3\) increases along the \emph{x}-axis.}

\end{figure}%

To estimate the RV in practice, researchers can use numerical
optimization with sample estimates of the covariance matrix. Our
parametrization \(\Sigma(\rho)\) is differentiable, and so researchers
can use reverse-mode automatic differentiation to calculate exact
gradients for the minimization problem in Eq.~\ref{eq-rv}. The bias
equality constraint can be handled by penalization. Researchers can
quantify estimation uncertainty by bootstrapping, as we demonstrate
below.

The RV provides a helpful, single number summary of how much the
underlying assumptions can be violated before risking a substantively
meaningful change in our research conclusion, one that accounts for the
nested dependency structure across the different identification
assumptions. However, it cannot be used to determine which
identification strategy has less (or more) bias necessarily, because the
actual value of \(\rho\) is unknown. However, a small RV \emph{does}
signify that even a small violation in any underlying assumption could
result in a substantial amount of bias, and warrants potential caution
in evaluating the credibility of an estimated causal effect.

\subsection{Benchmarking}\label{sec-benchmarking}

While the RV allows researchers to quantify and compare sensitivities,
it measures different assumptions for each identification strategy,
across which comparison may not always be straightforward. One approach
to aid in these comparisons is \emph{benchmarking}, which allows
researchers to use the observed covariates as hypothetical unobserved
confounders, to calibrate plausible values for the partial correlations
parametrizing assumption violations. This strategy cannot be directly
applied to the partial correlations in Section~\ref{sec-bias}, because
the unobserved confounder \(U\) is often part of the conditioning
set.\footnote{ When \(X\) and \(U\) are both in the conditioning set,
  treating a single covariate \(X_j\) as the confounder \(U\) does not
  change the value of the partial correlation at all.} The shared
population covariance matrix parameterization here, however, allows us
to benchmark the sensitivity parameters and the TRV for all the
identification strategies.

For the \(j\)-th covariate (where \(j \in \{1, ..., |X|\}\)), we define
\[
\hat \rho^{(j)} = \qty(
    \hat R_{X^{(j)} \sim W_Y \mid X^{-(j)}},
    \hat R_{X^{(j)} \sim W_Z \mid W_Y, X^{-(j)}},
    \hat R_{X^{(j)} \sim Z \mid W_Y, W_Z, X^{-(j)}},
    \hat R_{X^{(j)} \sim Y \mid Z, W_Z, W_Y, X^{-(j)}}
),
\] where \(X^{-(j)}\) is the set of covariates except the \(j\)-th. In
this way, \(\hat \rho^{(j)}\) represents a set of partial correlations
that would correspond to an omitted confounder \(U\) with equivalent
confounding strength as an observed covariate \(X^{(j)}\). Then, because
each partial correlation term in the \(A_\iv\) and \(A_\prox\) is a
function of the elements of \(\rho\), we can directly estimate the
benchmarked parameters, as well as the corresponding maximum assumption
violation that would occur from omitting a confounder \(U\) with
equivalent confounding strength to \(X^{(j)}\) (denoted by
\(\norm{A(\hat \rho^{(j)})}^2_\infty\)). The benchmarked total
assumption violation can be compared to the RV to evaluate how much
stronger (or weaker) a hypothetical \(U\) would have to be to result in
an assumption violation that would saturate the minimum violation
strength represented by the RV.

Benchmarking is most useful in settings when researchers have strong
substantive priors about the types of covariates that are prognostic of
different aspects of the data generating process. In general, we caution
that while benchmarking allows researchers to quantify the relative
strength of a confounder in terms of observed covariates, these types of
statements cannot be used to \emph{rule out} the existence of
confounding.

\subsection{Illustration on running example}\label{sec-rv-demo}

To see the proposed sensitivity summary measures in action, we return to
our running example from \citet{hk2022poland}, and calculate robustness
values for the individual protest outcome under each identification
strategy. For the bias threshold \(b\), we use the respective point
estimates, so that a bias of \(b\) would move the estimated effect to
zero. This choice of \(b\) favors proximal and especially IV, since
their estimates errors are larger in magnitude.\footnote{ For readers
  concerned about the use of a different \(b\) across strategies, the
  appendix contains an analysis of the group protest outcome, where
  fortuitously all estimates, and thus all \(b\), are nearly equal.}

Table~\ref{tbl-rv} shows these bias thresholds and their corresponding
total robustness values for each identification strategy. The table also
shows the minimizing value of \(\rho\) from the optimization
routine.\footnote{ We used a very small penalty on the total magnitude
  of \(\rho\) in order to encourage a unique solution to the
  optimization problem, but due to the limiations of numerical
  optimization, these values of \(\rho\) should be interpreted with some
  caution.} To account for uncertainty, we use a fractional-weighted
bootstrap and report the median absolute deviation of the estimates
across bootstrap iterations. The fractional-weighted bootstrap, a kind
of Bayesian bootstrap, avoids collinearity issues when projecting out
indicator variables that are only present in a few observations
\citep{xu2020applications}, and the use of MAD is appropriate given the
heavy skew of some of the bootstrap distributions, which results from
the quantities being bounded below by 0.

\setstretch{1.0}

\begin{longtable}[]{@{}lrrlr@{}}
\caption[]{Total robustness values and robustness allocations for the
reanalysis of \citet{hk2022poland}, individual protest outcome. The
median absolute deviations are shown in
parentheses.}\label{tbl-rv}\tabularnewline
\toprule\noalign{}
Strategy & \(b\) & \(\TRV(b)\) & (std. err.) & Minimizing value of
\(\rho\) \\
\midrule\noalign{}
\endfirsthead
\toprule\noalign{}
Strategy & \(b\) & \(\TRV(b)\) & (std. err.) & Minimizing value of
\(\rho\) \\
\midrule\noalign{}
\endhead
\bottomrule\noalign{}
\endlastfoot
\textbf{SOO} & \(-0.296\) & \(0.331\) & \(\textcolor{gray}{(0.0902)}\) &
\((0.0001, 0.0001, 0.5756, -0.5756)\) \\
\textbf{IV} & \(-0.805\) & \(0.0173\) &
\(\textcolor{gray}{(1.43\times 10^{-4})}\) &
\((0.0001, -0.1315, 0.9900, -0.0577)\) \\
\textbf{Proximal} & \(-0.627\) & \(0.011\) &
\(\textcolor{gray}{(0.00358)}\) &
\((-0.0036, -0.4653, 0.9900, -0.0577)\) \\
\end{longtable}

\setstretch{\papersp}

For both outcomes, selection-on-observables has the largest robustness
value across the identification strategies; for the individual protest
outcome, \(\TRV_\soo=0.331\). In other words, if the unobserved
confounder explains less than \(33\%\) of the variation in both
treatment and outcome, then the bias would not exceed \(b=-0.296\). In
comparison, the total robustness values for IV and proximal inference
are much smaller. For IV, \(\TRV_\iv=0.0173\) means that a violation of
the exclusion restriction and exogeneity as small as \(0.0173\) on the
\(R^2\) scale is sufficient to create bias as large as \(b=-0.805\). For
proximal, \(\TRV_\prox=0.011\) also leaves little room for even small
violations of the proximal assumptions. This does not necessarily mean
that there is a greater amount of bias in \(\tau_\iv\) and
\(\tau_\prox\). However, it \emph{does} mean that \emph{any} realistic
deviation from the proximal assumptions could result in a substantial
amount of bias, and researchers should be cautious when drawing causal
conclusions from \(\tau_\iv\) and \(\tau_\prox\).

To help reason about whether it is plausible that the magnitude of
violations captured by the RVs could occur, we perform the benchmarking
discussed above, estimate the maximum assumption violation that would
occur for an omitted \(U\) that is as strong as an observed covariate
\(X^{(j)}\). Table~\ref{tbl-bench} shows that across the different
covariates, the maximum assumption violations for IV and proximal are
consistently larger than the benchmarked maximum assumption violations
for selection-on-observables. More specifically, SOO's RV is around 7
times larger than the largest benchmarked maximum assumption violation,
whereas the RVs for IV and proximal are are around 5--9 times
\emph{smaller} than the benchmarked maximum assumption violations.
Appendix~\ref{sec-app-sens} contains benchmarked values for each
identifying assumption, and a parallel analysis of the group protest
outcome, where we observe similar patterns. Notably, even though all
three estimators have nearly identical point estimates, both IV and
proximal have very small RVs and so may be susceptible to bias in the
presence of even small violations of their underlying assumptions.

\setstretch{1.0}

{

\begin{longtable}[]{@{}lrrrr@{}}

\caption{\label{tbl-bench}Benchmarking values for individual protest
outcome. Shown are the change in \(\tau\) from leaving out each
covariate, and the benchmarked total assumption violations. Median
absolute deviations for each quantity are shown in parentheses.}

\tabularnewline

\toprule\noalign{}
Covariate & Change in \(\tau\) &
\(\norm{A_\soo(\hat \rho(j))}_\infty^2\) &
\(\norm{A_\iv(\hat \rho(j))}_\infty^2\) &
\(\norm{A_\prox(\hat \rho(j))}_\infty^2\) \\
\midrule\noalign{}
\endhead
\bottomrule\noalign{}
\endlastfoot
Wealth & \(9.0\times 10^{-4}~~\textcolor{gray}{(0.005)}\) &
\(0.0020~~\textcolor{gray}{(0.003)}\) &
\(0.093~~\textcolor{gray}{(0.01)}\) &
\(0.093~~\textcolor{gray}{(0.1)}\) \\
Schools & \(-8.8\times 10^{-4}~~\textcolor{gray}{(0.001)}\) &
\(0.0019~~\textcolor{gray}{(0.003)}\) &
\(0.092~~\textcolor{gray}{(0.01)}\) &
\(0.092~~\textcolor{gray}{(0.1)}\) \\
Industrialization & \(-0.00123~~\textcolor{gray}{(0.002)}\) &
\(0.0017~~\textcolor{gray}{(0.003)}\) &
\(0.092~~\textcolor{gray}{(0.01)}\) &
\(0.092~~\textcolor{gray}{(0.1)}\) \\
Log population & \(-0.02884~~\textcolor{gray}{(0.046)}\) &
\(0.0450~~\textcolor{gray}{(0.018)}\) &
\(0.093~~\textcolor{gray}{(0.01)}\) &
\(0.093~~\textcolor{gray}{(0.1)}\) \\
Diversity & \(-0.00707~~\textcolor{gray}{(0.008)}\) &
\(0.0158~~\textcolor{gray}{(0.010)}\) &
\(0.094~~\textcolor{gray}{(0.02)}\) &
\(0.094~~\textcolor{gray}{(0.1)}\) \\
Katowickie region & \(-0.00672~~\textcolor{gray}{(0.025)}\) &
\(0.0394~~\textcolor{gray}{(0.017)}\) &
\(0.100~~\textcolor{gray}{(0.02)}\) &
\(0.100~~\textcolor{gray}{(0.1)}\) \\
Opolskie region & \(0.00421~~\textcolor{gray}{(0.004)}\) &
\(0.0102~~\textcolor{gray}{(0.008)}\) &
\(0.090~~\textcolor{gray}{(0.01)}\) &
\(0.090~~\textcolor{gray}{(0.1)}\) \\

\end{longtable}

}

\setstretch{\papersp}

\section{Implications of choosing an identification
strategy}\label{sec-unify}

The common parametrization from Section~\ref{sec-sens-comp} has deeper
implications for comparing estimators. In particular, when researchers
choose an identification strategy, they are implicitly expressing a
belief about the degree of violations that must be present in
alternative identification strategies. In this section, we show that all
three estimator's biases can be re-expressed in terms of the standard
SOO sensitivity parameters
\(\{\hlsoo{R_{Y \sim U \mid Z, W_Z, W_Y}}, \hlsoo{R_{Z \sim U \mid W_Z, W_Y}}\}\).
This means, for example, that if a researcher believes that the IV
identification assumptions hold, then a certain magnitude of unobserved
confounding must also be present. In some cases, this degree of
confounding may be implausible. Researchers might then conclude that the
IV assumptions cannot hold exactly. The bias re-expression also allows
us to visualize the relative bias of each approach on a single bias
contour plot.

\subsection{Re-writing bias with respect to selection-on-observables
parameters}\label{re-writing-bias-with-respect-to-selection-on-observables-parameters}

Section~\ref{sec-sens-comp} parametrized the data-generating process in
terms of four partial correlations, and used this parametrization to
define and evaluate a set of numerical sensitivity summary measures.
However, for the purposes of evaluating bias in \(\tau_\soo\),
\(\tau_\iv\), and \(\tau_\prox\), we can reduce the number of parameters
to just two.

Specifically, we can re-write the bias of \(\tau_\iv\) and
\(\tau_\prox\) with the selection-on-observables parameters
\(\hlsoo{R_{Y\sim U\mid Z,W_Z,W_Y}}\) and
\(\hlsoo{R_{Z\sim U\mid W_Z,W_Y}}\). The key observation is that for a
particular \(\sooRsq\), the bias of the selection-on-observables
estimate will be fixed, which implicitly also fixes the value of
\(\tau\), the true treatment effect. As such, we can re-express the bias
of \(\tau_\iv\) and \(\tau_\prox\) as
\begin{equation}\protect\phantomsection\label{eq-bias-unif}{
\begin{aligned}
\bias(\tau_\iv) &= \tau_\iv - \underbrace{ \qty{ \tau_\soo -
    \bias(\tau_\soo; \hlsoo{R_{Y\sim U\mid Z,W_Z,W_Y}}, \hlsoo{R_{Z\sim U\mid W_Z,W_Y}}) }}_{\coloneq \tau} \\
\bias(\tau_\prox) &= \tau_\prox - \qty{ \tau_\soo -
    \bias(\tau_\soo; \hlsoo{R_{Y\sim U\mid Z,W_Z,W_Y}}, \hlsoo{R_{Z\sim U\mid W_Z,W_Y}}) }
\end{aligned}
}\end{equation} Both of these expressions depend only on the observed
data and the value \(\biasSOO\), which is fixed by the parameters
\(\sooRsq\). Note that the partial correlations corresponding to other
assumptions, like the exclusion restriction
\(R_{Y\sim W_Z\mid Z,W_Y,U}\) \emph{do} depend on other parts of
\(\rho\); it is only the bias of each estimate that can be reduced to
exactly the SOO parameters.

This unification is advantageous for several reasons. First,
Eq.~\ref{eq-bias-unif} gives us a route to translate assumptions made
for one identification strategy into the language of treatment and
outcome confounding. Suppose a researcher believes that their IV
strategy is highly plausible, perhaps in part due to the IV-specific
sensitivity analyses developed in this paper, and furthermore they are
able to estimate the likely direction of IV bias. As a result, they
might believe that most likely \(0<\bias(\tau_\iv)<0.1\).
Eq.~\ref{eq-bias-unif} then connects this belief to an equivalent belief
about likely values of \(\sooRsq\). The researcher's judgement of the
plausibility of these values can further inform their understanding of
the IV bias, or even their choice of identification strategy.

Second, Eq.~\ref{eq-bias-unif} can be directly solved for the set of SOO
sensitivity parameters where each identification strategy has lower bias
than the others. Because Eq.~\ref{eq-bias-unif} is linear in all bias
terms, each estimator will dominate the others on a particular interval
of values of \(\tau\), corresponding in turn to a particular interval of
values of \(\biasSOO\). These intervals are exactly obtained by cutting
the real line at the midpoints between the values of \(\tau_\soo\),
\(\tau_\iv\), and \(\tau_\prox\). For example, if \(\tau_\soo=0.1\),
\(\tau_\iv=0.4\), and \(\tau_\prox=0.2\), then SOO dominates for the
\(\sooRsq\) corresponding to any true \(\tau<0.15\), proximal inference
dominates sensitivity parameters yielding \(0.15<\tau<0.3\), and IV
dominates for parameters giving \(\tau>0.3\). We can map these intervals
to the traditional sensitivity plot, discussed in the next section, as
contour lines that bound regions where each estimator dominates. These
regions can be visualized together, providing insight into the
assumptions and sensitivity of each identification approach.

Readers will note that the manipulations in Eq.~\ref{eq-bias-unif} are
in no way reliant on using the SOO sensitivity parameters specifically:
the same re-expression could be applied in terms of the partial
correlations for IV or proximal bias. However, the bias expressions for
IV and proximal are substantially more complicated, and rely on
unobservable partial correlations with no direct connection to the IV
and proximal assumptions. Moreover, the subtraction that is present in
those bias expressions means that bias is not monotonic in either
sensitivity parameter the way that it is for SOO, which further
complicates interpretation and visualization. Finally, treatment and
outcome confounding are in many ways the core challenges of causal
inference, and researchers are familiar with considering confounding
mechanisms in these terms. Thus we focus specifically on the SOO
parametrization, as we anticipate it will be easiest to reason about for
applied researchers.

\subsection{Augmented bias contour
plots}\label{augmented-bias-contour-plots}

The unified bias framework also allows us to extend the standard SOO
sensitivity plot and allow researchers to visually compare the relative
biases across the identification strategies. Traditionally, bias contour
plots have been employed to evaluate SOO bias by varying the two
sensitivity parameters \(\sooRsq\). A contour line corresponding to
\(\tau=0\) is often plotted to help identify which values of sensitivity
parameters would change the sign of the estimate. The benchmarked
sensitivity parameters are often also plotted to help researchers reason
about what could be plausible magnitudes of the underyling sensitivity
parameters.

\begin{figure}

\centering{

\includegraphics[width=6.25in,height=\textheight,keepaspectratio]{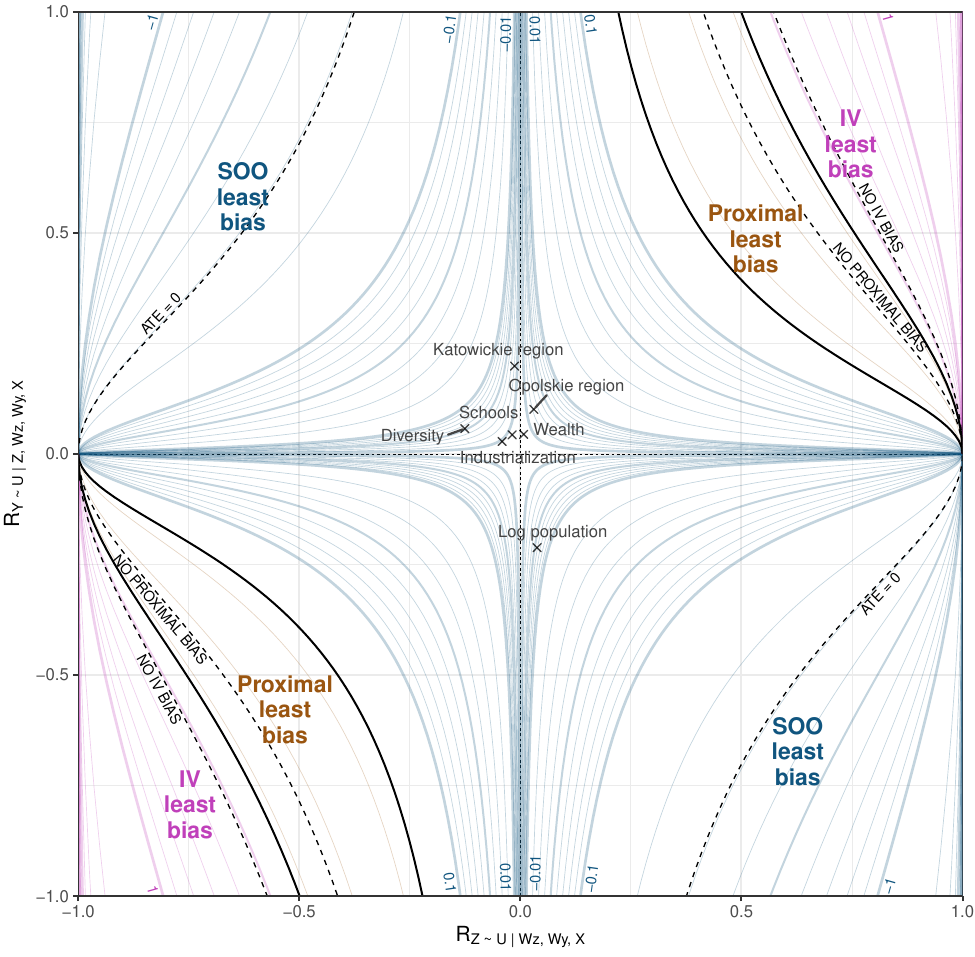}

}

\caption{\label{fig-contour}Sensitivity plot for the running example.
The contour lines are on a logarithmic scale and show the amount of SOO
confounding bias. Contours are colored according to the regions where
each estimator has lowest bias. Gray crosses indicate benchmark
sensitivity parameter values for each of the covariates.}

\end{figure}%

The augmented sensitivity plot we propose builds on this existing
visualization in three ways. First, because the sign of the bias, and
thus the dominating estimator, depends on the sign of the sensitivity
parameters, we generate the sensitivity plot using the signed parameters
rather than their square. Second, as described in the preceding section,
we separate the contour lines according to the regions where each
identification strategy dominates, indicating this partition by lines
and colors. Third, we plot a contour line that corresponds to zero bias
in either the proximal or IV estimate. If the assumptions of either of
those strategies hold exactly, then the SOO sensitivity parameters
\emph{must} lie along the corresponding contour.

While our primary focus throughout the paper has been on the bias of
different identification strategies, researchers may instead be
interested in the relative mean-squared error of the different
approaches. Notably, the same re-expression in Eq.~\ref{eq-bias-unif}
can be applied for mean-squared error, and a similar augmented
sensitivity plot can be produced. In some cases this may show that the
zero-bias contour for proximal or IV is actually contained in the
lowest-MSE region for another estimator. In other words, that \emph{even
if} the IV or proximal assumptions hold exactly, another estimator may
be preferred due to lower mean-squared error. This phenomenon has been
observed for IV estimates when the instrument is weak \citep[see,
e.g.,][]{bound1995problems, hahn2005estimation}. See
Appendix~\ref{sec-app-mse} for more discussion.

\subsection{Illustration on running
example}\label{illustration-on-running-example}

We now return to our running example, where we generate the proposed
augmented sensitivity plot, including the benchmarked values of
\(\sooRsq\) based on the observed covariates (see
Figure~\ref{fig-contour}) If an analyst had originally designed their
observational study around IV, for instance, then their beliefs about
the values of the SOO sensitivity parameters must lie close to the curve
labeled ``No IV Bias'' in Figure~\ref{fig-contour}. In other words,
\emph{belief} about an identification strategy implies a relatively
narrow set of beliefs on the values of the sensitivity parameters.
Practitioners can compare that set of parameter values to the benchmark
values of the covariates, which are plotted as gray crosses in
Figure~\ref{fig-contour}. In this case, the benchmark values lie
squarely in the region labeled ``SOO least bias.'' The largest
benchmarked \(R_{Z\sim U\mid W_Z,W_Y}\) is \(0.0378\) and corresponds to
(log) municipality population. In contrast, the \emph{smallest} possible
value of \(R_{Z\sim U\mid W_Z,W_Y}\) consistent with a valid proximal
strategy is \(0.413\)--around \(11\) times larger. For IV, the factor
grows to \(15\). Both of those factors assume worst-case outcome
confounding, with \(R_{Y\sim U\mid Z,W_Z,W_Y}=1\), which is also several
times larger than the confounding associated with any of the observed
covariates.

Thus if a researcher is to proceed with a proximal or IV strategy, they
must believe that the unobserved confounder is significantly stronger
than any of the observed covariates' benchmark values, and that the
treatment and outcome confounding are related in a way as to fall close
to one of the no-bias contour lines in Figure~\ref{fig-contour}. One
immediate qualitative implication is that treatment and outcome
confounding must have the same sign for a proximal or IV approach to
produce less bias. If they do not, then the sensitivity parameters lie
in quadrants II or IV of Figure~\ref{fig-contour}, where the SOO
estimate always dominates.

Figure~\ref{fig-contour} additionally provides insight on the total
robustness values in Section~\ref{sec-sens-comp}. Near the origin, where
the SOO assumptions approximately hold, the contour lines are sparsely
spaced (note that the plotted contours in Figure~\ref{fig-contour} are
on a logarithmic scale). Thus, even moderate violations of the SOO
assumptions lead to small changes in the value of \(\tau\), and thus
small amounts of bias. In contrast, the ``No Proximal Bias'' and ``No IV
Bias'' lines in Figure~\ref{fig-contour} are much farther from the
origin, where the contour lines are denser. Smaller changes in the
sensitivity parameters in these regions lead to correspondingly larger
changes in bias, which in turn lowers the total robustness values.

From Eq.~\ref{eq-bias-unif}, the distance of the ``No Proximal Bias''
and ``No IV Bias'' lines from the origin is directly related to the
difference between the point estimates for these identification
strategies and the SOO estimate. The more the estimates differ, the more
confounding is necessarily implied under each set of identification
assumptions. As both Theorem~\ref{thm-iv} and
Theorem~\ref{thm-prox-bias} show, the bias of the IV and proximal
estimates grows with the amount of outcome confounding. Thus a larger
observed difference between the SOO estimate and IV or proximal
estimates implies that the outcome confounding is larger, which in turn
will generally translate to lower robustness.

\section{Conclusion}\label{sec-conclude}

This paper proposes a unified sensitivity framework for researchers to
consider the relative bias of different identification strategies. While
the unified model imposes some constraints compared to a fully
nonparametric setup, the relative simplicity also clarifies the key
drivers of bias under IV and proximal identification, and highlights
connections between each estimator and their biases. As
Section~\ref{sec-sens-comp} and Section~\ref{sec-unify} show, the
dimensionality of unidentified parts of the causal model is limited, and
this creates (usually nonlinear) relationships between the various
causal assumptions. Thus, by comparing benchmarking values and point
estimates under each strategy, more can be said about each of the
identifying assumptions. The bias expressions in Section~\ref{sec-bias}
and the proposed total robustness values also help researchers
understand the sensitivity of each approach to violations of these
assumptions.

There are several interesting avenues of future work. One natural
direction would be to explore the extent to which these findings hold in
a fully nonparametric setting. As \citet{chernozhukov2024long} show for
SOO, key intuition and structure from the linear setting can carry over
with little modification to a nonparametric model. The two-stage
estimation of IV and proximal, however, complicates such an analysis and
would warrant careful investigation.

Second, recent work in observational causal inference has introduced an
idea known as design sensitivity for observational studies, which allows
researchers to consider sensitivity to potential violations in the
underlying identification assumptions \emph{a priori}
\citep[e.g.,][]{rosenbaum2004design, rosenbaum2010design, huang2025design}.
Design sensitivity quantifies the tradeoffs in robustness that arise
from different `design' choices (i.e., choosing an estimand, treatment
definitions, study populations). However, much of this work focuses
explicitly on the SOO setting. Future work could build on the unified
sensitivity framework presented in this paper to consider design
sensitivity across different identification strategies.

\clearpage

\renewcommand{\bibsection}{}
\bibliography{references.bib}

\clearpage
\appendix

\appendix

\renewcommand\thefigure{\thesection.\arabic{figure}}

\setcounter{figure}{0}

\section{Additional Discussion}\label{sec-app-discussion}

\subsection{Covariates and identifying assumption plausibility in the
running example}\label{sec-app-covar}

\citet{hk2022poland} use the following covariates in their analysis:

\begin{itemize}
\tightlist
\item
  indicators for the three subregions of Upper Silesia;
\item
  a wealth index calculated as an average of the number of shops,
  restaurants, and cinemas in each municipality;
\item
  the number of schools in each municipality, which the authors argue is
  a measure of state capacity;
\item
  an industrialization index built as an average of the presence of
  other mineral deposits and size of coal deposits, which the authors
  note is also correlated with poor working conditions and more
  grievances against the government;
\item
  an ethnic diversity index measured by the number of migrants into
  Upper Silesia after World War II;
\item
  the population of each municipality;\footnote{Unlike the original
    authors, we took the logarithm of this variable and also used the
    square root of population to normalize the counts of different
    buildings and establishments. In general, we find the original
    analysis highly sensitive to treatment of municipality populations.
    We do not further these concerns in the present analysis, however,
    since they are not directly related to the causal assumptions under
    study.} and
\item
  an indicator for which municipalities were formerly under Russian
  occupation.
\end{itemize}

While these covariates capture several plausible causes of both
surveillance and antiregime sentiment, they are far from perfect. Many
of the covariates were justified by Hager and Krakowski as proxies for
broader underlying causes, such as ``state capacity;'' if the error in
the observed proxies (e.g., the number of schools) is at all correlated
with treatment and outcome, then the SOO assumptions would be violated.
Moreover, it is easy to imagine that municipalities that experienced
above-expected antiregime activity in some years---that is, more than
average given their covariates---might then be subjected to increased
surveillance. Since the variables here are aggregated over multiple
years, some of these dynamic effects could also lead to violations of
the SOO assumptions. We stress that the original authors conducted
extensive investigations into all of their identifying assumptions, and
collected additional data to test certain assumptions, and we lack the
space here to fully summarize their justifications. However, the threats
to causal inference under SOO were plausible enough that the authors
used an instrumental variable strategy for their primary analysis.

For the instrumental variable analysis, the authors point to several
facts to justify the exogeneity assumption. First, new priests were sent
to parishes upon the retirement or death of previous priests, which are
plausibly exogenous. Second, the Catholic Church excommunicated priests
found to have been corrupted, making it less likely that a priest's
corruptibility increased their chances of assignment. To justify the
exclusion restriction, Hager and Krakowski note the inherent secrecy of
the corrupted priests---the citizens protesting would not have had
knowledge of which priests were corrupted, so a backlash effect is less
likely. They also point to the roughly three-decade lag between the
collection of their corruption data and the protests. Finally, they
conduct several regression analyses on auxiliary data to argue against
the presence of several disallowed causal pathways.

However, it is not difficult to imagine plausible mechanisms that would
violate either assumption. Suppose the Catholic Church reserved
assignments to important cities, such as those with a history of protest
or surveillance, for priests that they judged to be more resistant to
recruitment. Alternatively, suppose the SB stepped up priest recruiting
efforts in cities they deemed more likely to protest or resist. Even
municipality population could be a confounder, since the count of
corrupted priests was not normalized by population, and population
enters the estimation equations linearly only.

For the exclusion restriction, the very lag between the collection of
the corruption data and the protests also allowed time for priests'
corruption to be exposed, which could affect protest in either direction
(instilling fear or provoking backlash). Hager and Krakowski also raise
the possibility of indoctrination of parishioners by the corrupted
priests.

\subsection{Proximal validity}\label{proximal-validity}

\begin{proposition}[Proximal
validity]\protect\hypertarget{prp-prox-valid}{}\label{prp-prox-valid}

Define \(\tau_{\prox}\) as the solution to the following two-stage least
squares procedure:

\begin{enumerate}
\def\labelenumi{\arabic{enumi}.}
\tightlist
\item
  Regress \(W_Y\) on \(Z\) and \(W_Z\) and generate predicted values
  \(\hat W_Y\).
\item
  Regress \(Y\) on \(Z\) and \(\hat W_Y\). The estimated coefficient on
  \(Z\) is denoted \(\tau_\prox\).
\end{enumerate}

\noindent Then, under Assumptions \ref{asm-prox-y} and \ref{asm-prox-z},
\(\tau_\prox = \tau\).

\end{proposition}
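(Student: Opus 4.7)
The plan is to reduce proximal inference to a standard just-identified IV problem by absorbing the unobserved confounder $U$ into the outcome proxy $W_Y$. To start, I would simplify Eq.~\ref{eq-struct} using the two assumptions: \ref{a-prox-y} forces $\gamma_{w_y} = 0$ and $\alpha_{w_z} = 0$, so $W_Y = \alpha_u U + \eps_{w_y}$ and the treatment equation loses its $W_Y$ term, while \ref{a-prox-z} forces $\beta_{w_z} = 0$, so the outcome equation simplifies to $Y = \tau Z + \beta_{w_y} W_Y + \beta_u U + \eps_y$. Provided the relevance condition $\alpha_u \neq 0$ holds, I can invert the outcome proxy equation as $U = (W_Y - \eps_{w_y})/\alpha_u$ and substitute to obtain the reduced form
\[ Y = \tau Z + \delta W_Y + \eta, \qquad \delta := \beta_{w_y} + \beta_u/\alpha_u, \quad \eta := \eps_y - (\beta_u/\alpha_u)\eps_{w_y}. \]

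Next, I would verify that $(Z, W_Z)$ form valid instruments for $(Z, W_Y)$ in this reduced equation. The composite residual $\eta$ is a linear combination of the noise terms $\eps_y$ and $\eps_{w_y}$, which are independent of $U$, $W_Z$, $\eps_z$, and $\eps_{w_z}$ by the structural specification. Because both $Z = \gamma_u U + \gamma_{w_z} W_Z + \eps_z$ (under \ref{a-prox-y}) and $W_Z = \varphi_u U + \eps_{w_z}$ are built entirely from variables independent of $\eta$, the moment conditions $\E[Z\eta] = 0$ and $\E[W_Z \eta] = 0$ follow. The regressor $W_Y$, however, shares the $\eps_{w_y}$ term with $\eta$ and so is endogenous in the reduced form; direct OLS of $Y$ on $(Z, W_Y)$ would be biased, but $W_Z$ serves as a valid external instrument for $W_Y$ since it is correlated with $W_Y$ through the shared dependence on $U$.

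Finally, I would observe that the two-step procedure in the statement is algebraically identical to just-identified 2SLS with outcome $Y$, structural regressors $(Z, W_Y)$, and instrument vector $(Z, W_Z)$. The first stage, regressing $W_Y$ on $(Z, W_Z)$, constructs the linear projection $\hat W_Y$ of $W_Y$ onto the instrument span; the second stage, regressing $Y$ on $(Z, \hat W_Y)$, is equivalent to solving the population moment conditions $\E[Z(Y - \tau Z - \delta W_Y)] = 0$ and $\E[W_Z(Y - \tau Z - \delta W_Y)] = 0$ for $(\tau, \delta)$. The previous paragraph shows that the structural $(\tau, \delta)$ satisfies these two equations; under the standard rank condition---which reduces here to $\alpha_u \neq 0$ together with $\gamma_{w_z} \neq 0$, so that $W_Z$ has nontrivial partial correlation with $W_Y$ after partialing out $Z$---this solution is unique, yielding $\tau_\prox = \tau$.

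The main obstacle is conceptual rather than computational: the first stage regresses $W_Y$ rather than $Z$ on the instruments, inverting the usual IV intuition, and the correctness of the substitution $U = (W_Y - \eps_{w_y})/\alpha_u$ relies crucially on \ref{a-prox-y} making $W_Y$ a one-dimensional image of $U$ uncontaminated by $W_Z$, and on \ref{a-prox-z} preventing $W_Z$ from re-entering the outcome residual. Once these structural reductions are in place, the remainder is routine verification of IV moment conditions and a rank check.
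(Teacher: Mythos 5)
Your proof is correct, but it takes a genuinely different route from the paper's. The paper works directly with the two-stage estimator: it applies Frisch--Waugh--Lovell to write $\tau_\prox-\tau$ as a sum of three residual-covariance terms weighted by $\beta_u$, $\beta_{w_y}$, and $\beta_{w_z}$, then kills each term separately --- the $W_Y$ term by orthogonality of least-squares residuals to the predictors (Lemma~\ref{lem-wy-y}), the $W_Z$ term because \ref{a-prox-z} forces $\beta_{w_z}=0$, and the $U$ term by an explicit projection-matrix computation showing that under \ref{a-prox-y} one has $\hat W_Y=\alpha_u P_{\tilde W_Z}U$ and hence $\cov(U^{\bot\hat W_Y},Z^{\bot\hat W_Y})=0$. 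You instead eliminate $U$ by inverting the proxy equation, recast the model as a linear outcome equation with an endogenous regressor $W_Y$ and instrument vector $(Z,W_Z)$, and invoke standard just-identified IV identification. Your route buys interpretability: it makes explicit \emph{why} the procedure works ($W_Y$ is a noisy image of $U$ and $W_Z$ instruments for the contamination $\eps_{w_y}$), and it cleanly isolates where each assumption enters. The paper's route avoids dividing by $\alpha_u$ and stays in the residual-covariance form that is reused verbatim in the bias decomposition of Theorem~\ref{thm-prox-bias} when the assumptions fail. Both arguments rely on an unstated relevance condition; one minor imprecision on your side is the attribution of the rank condition to ``$\alpha_u\neq 0$ together with $\gamma_{w_z}\neq 0$'' --- the correct requirement is exactly what you state in words, namely $\cov(W_Z^{\bot Z},W_Y^{\bot Z})\neq 0$, but since $W_Z$ and $W_Y$ are linked only through $U$, this is governed by $\alpha_u$ together with the $U$-pathways into $W_Z^{\bot Z}$ (e.g.\ $\varphi_u\neq0$, or indirectly $\gamma_u\gamma_{w_z}\neq0$), not by $\gamma_{w_z}$ alone. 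This does not affect the validity of the argument, and the paper leaves the analogous condition implicit in the invertibility of $U^\top P_{\tilde W_Z}U$.
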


The proof, whose details appear in Appendix~\ref{sec-app-proofs}, relies
on the Frisch-Waugh-Lovell theorem: \begin{align*}
\tau_{\prox} &= \frac{\cov(Y^{\bot \hat W_Y}, Z^{\bot \hat W_Y})}{\var(Z^{\bot \hat W_Y})} \\
&= \frac{\cov\qty((\beta_u U + \beta_{w_y} W_Y + \beta_{w_z} W_Z + \tau Z + \eps_y)^{\bot \hat W_Y}, Z^{\bot \hat W_Y})}{\var(Z^{\bot \hat W_Y})} \\
&= \tau +
    \beta_u \frac{\cov(U^{\bot \hat W_Y}, Z^{\bot \hat W_Y})}{\var(Z^{\bot \hat W_Y})} +
    \beta_{w_y} \frac{\cov(W_Y^{\bot \hat W_Y}, Z^{\bot \hat W_Y})}{\var(Z^{\bot \hat W_Y})} +
    \beta_{w_z} \frac{\cov(W_Z^{\bot \hat W_Y}, Z^{\bot \hat W_Y})}{\var(Z^{\bot \hat W_Y})}.
\end{align*} The construction of \(\hat W_Y\) ensures
\(\cov(W_Y^{\bot \hat W_Y}, Z^{\bot \hat W_Y})=0\) (see
Lemma~\ref{lem-wy-y}), which also makes sense if \(\hat W_Y\) is viewed
as a proxy for \(U\). Assumption~\ref{asm-prox-z} guarantees
\(\beta_{w_z}=0\), and it turns out that Assumption~\ref{asm-prox-y}
guarantees that \(\cov(U^{\bot \hat W_Y}, Z^{\bot \hat W_Y})=0\), so the
two-stage approach is valid.

\subsection{IV bias under exogenous instruments}\label{sec-app-exog}

In some settings, researchers are able to exploit specific features of a
given study to ensure the instrument is exogenous by design (i.e.,
lotteries, random assignment to judges). When there is \emph{no}
violation in exogeneity of the instrument, we can simplify the bias
expression for \(\tau_{\iv}\) to only depend on the instrument strength
and violations in the exclusion restriction.

\begin{corollary}[Relative bias with exogenous
instruments]\protect\hypertarget{cor-relative-bias-iv}{}\label{cor-relative-bias-iv}

In settings where the instrument is exogenous, the bias of an IV
estimator simplifies to: \begin{align*}
\underbrace{\frac{1}{R_{Z \sim W_Z \mid W_Y}} \cdot \frac{\sd(Y^{\bot W_Y, Z})}{\sd(Z^{\bot W_Y})}}_{\text{(a) Instrument Strength}} \times \underbrace{ R_{Y \sim W_Z \mid W_Y, Z, U} \cdot \sqrt{\left( 1-R^2_{Y \sim U \mid W_Y, Z}\right) \cdot \frac{1}{1-R^2_{Z \sim W_Z \mid W_Y}}}}_{\text{(b) Violations in Exclusion Restriction}}.
\end{align*} The relative bias between an IV estimator and an SOO
estimator will then depend on the degree to which the exclusion
restriction is violated. In general, when the following holds:
\[R^2_{Y \sim W_Z \mid W_Y, Z, U} \leq \frac{R^2_{Y \sim U \mid W_Y, W_Z, Z}}{1- \gamma \cdot R^2_{Y \sim U \mid W_Y,W_Z, Z}} \cdot \frac{R^2_{Z \sim U \mid W_Y, W_Z}}{1-R^2_{Z \sim U \mid W_Y, W_Z}} \cdot (1-R^2_{Y \sim W_Z \mid W_Y, Z}) \cdot  R^2_{Z \sim W_Z \mid W_Y},\]
then \(\tau_{\iv}\) will be less biased than \(\tau_\soo\), where
\(\gamma\) represents how different \(R^2_{Y \sim U \mid W_Y, W_Z, Z}\)
is to \(R^2_{Y \sim U \mid W_Y, Z}\).

\end{corollary}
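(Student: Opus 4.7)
The proof splits into two stages. The first stage establishes the simplified form of the bias under exogeneity by specializing Theorem~\ref{thm-iv}. Starting from the compact representation
$$\bias(\tau_{\iv}) = \frac{1}{R_{Z\sim W_Z \mid W_Y}} \cdot \frac{\sd(W_Z^{\perp W_Y})}{\sd(Z^{\perp W_Y})} \left( \beta_{w_z} + \beta_u \varphi_u \right),$$
the exogeneity assumption $\varphi_u = 0$ immediately eliminates the second summand, leaving a bias driven entirely by $\beta_{w_z}$. I would then decompose $\beta_{w_z}$ via the Frisch--Waugh identity $\beta_{w_z} = R_{Y \sim W_Z \mid Z, W_Y, U} \cdot \sd(Y^{\perp Z, W_Y, U})/\sd(W_Z^{\perp Z, W_Y, U})$, and apply the standard variance-residualization identities $\sd(Y^{\perp Z, W_Y, U}) = \sd(Y^{\perp Z, W_Y})\sqrt{1 - R^2_{Y \sim U \mid Z, W_Y}}$ and $\sd(W_Z^{\perp Z, W_Y}) = \sd(W_Z^{\perp W_Y})\sqrt{1 - R^2_{Z \sim W_Z \mid W_Y}}$ to collect everything into the observable scaling factor and the exclusion-restriction term displayed in the corollary.

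For the second stage, my plan is to compare $\bias^2(\tau_{\iv})$ and $\bias^2(\tau_{\soo})$ and solve the resulting inequality for the exclusion-restriction parameter $R^2_{Y \sim W_Z \mid W_Y, Z, U}$. Using the simplified IV bias from the first stage together with the SOO bias expression in Eq.~\ref{eq-bias-soo}, I form the ratio $\bias^2(\tau_{\iv})/\bias^2(\tau_{\soo})$ and impose the condition that this ratio is at most one. This yields an inequality in which all observed scaling quantities can be grouped on one side and the sensitivity parameters on the other. The variance ratios that appear (such as $\var(Y^{\perp W_Y, Z})/\var(Y^{\perp W_Y, W_Z})$ and $\var(Z^{\perp W_Y})/\var(Z^{\perp W_Y, W_Z})$) would be translated into partial-$R^2$ terms using identities like $\var(Y^{\perp A}) = \var(Y^{\perp A, B})/(1 - R^2_{Y \sim B \mid A})$ and the symmetric relation $(1 - R^2_{Y \sim W_Z \mid W_Y})/(1 - R^2_{Y \sim Z \mid W_Y}) = (1 - R^2_{Y \sim W_Z \mid W_Y, Z})/(1 - R^2_{Y \sim Z \mid W_Y, W_Z})$.

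The main obstacle is reconciling the different conditioning sets that appear in the IV and SOO bias expressions. In particular, the simplified IV bias naturally carries $R^2_{Y \sim U \mid W_Y, Z}$, whereas the SOO bias uses the standard sensitivity parameter $R^2_{Y \sim U \mid W_Z, W_Y, Z}$, which additionally conditions on $W_Z$. Bridging the two requires introducing the auxiliary quantity $\gamma$, defined so that $1 - R^2_{Y \sim U \mid W_Y, Z} = 1 - \gamma R^2_{Y \sim U \mid W_Y, W_Z, Z}$, i.e., $\gamma$ captures the discrepancy between the two conditioning sets. Once $\gamma$ is introduced, the remaining steps amount to collecting the surviving partial-$R^2$ factors and rearranging the inequality into the explicit form stated in the corollary. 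Because several factors in the variance-ratio identities can be bounded by one, the resulting condition is naturally a sufficient (rather than necessary and sufficient) one, consistent with the ``in general, when the following holds'' phrasing in the corollary statement.
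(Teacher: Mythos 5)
Your proposal is correct and follows essentially the same route as the paper: drop the $\beta_u\varphi_u$ term under exogeneity, reuse the partial-$R^2$ expansion of $\beta_{w_z}$ from the proof of Theorem~\ref{thm-iv}, then compare squared biases, solve for $R^2_{Y\sim W_Z\mid W_Y,Z,U}$, and introduce $\gamma$ to reconcile the conditioning sets $\{W_Y,Z\}$ versus $\{W_Y,W_Z,Z\}$. One minor note: in the paper's derivation the variance ratios cancel \emph{exactly} (e.g.\ $(1-R^2_{Z\sim W_Z\mid W_Y})\cdot\var(Z^{\bot W_Y})/\var(Z^{\bot W_Y,W_Z})=1$ and $\var(Y^{\bot W_Y,W_Z,Z})/\var(Y^{\bot W_Y,Z})=1-R^2_{Y\sim W_Z\mid W_Y,Z}$), so no bounding by one is required and the displayed condition is an exact rearrangement of the squared-bias comparison rather than a strictly sufficient relaxation.
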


For a fixed \(\gamma\) value, we can vary the selection-on-observables
parameters \(\sooRsq\) to find the maximum
\(R^2_{Y \sim W_Z \mid W_Y, Z, U}\) value that would still result in a
less biased estimator for \(\tau_\iv\) over \(\tau_\soo\). There exists
a set of \(\sooRsq\) that correspond to an upper bound on
\(R^2_{Y \sim W_Z \mid W_Y, Z, U}\) greater than 1. In that setting,
\(\tau_{\iv}\) will be less biased than \(\tau_\soo\). The bound on
\(R^2_{Y \sim W_Z \mid W_Y, Z, U}\) will increase as
\(R^2_{Y \sim U \mid W_Y, W_Z, Z}\) and \(R^2_{Z \sim U \mid W_Y, W_Z}\)
increase. Intuitively, this is because as \(\sooRsq\) increases in
magnitude, this implies that there is a greater amount of bias in
\(\tau_\soo\) from a stronger omitted confounder. As such, even under
relatively large violations of the exclusion restriction, \(\tau_{\iv}\)
may result in less biased estimates.

Notably, Corollary~\ref{cor-relative-bias-iv} provides a simple way to
compare the relative bias between \(\tau_\soo\) and \(\tau_\iv\).
However, this requires a relatively strong assumption that the
instrument used is exogenous. In settings when the instrumental variable
is not exogenous by design, researchers must simultaneously reason about
both violations in the exclusion restriction and exogeneity. This is
less straightforward, because as Theorem~\ref{thm-iv} highlights, the
bias from violations of exogeneity will interact with the
selection-on-observables bias.

We now turn to comparing IV with proximal in this special setting. We
now additionally assume that the outcome proxy is in fact valid, such
that there is no dependency between \(W_Y\) and the instrument \(W_Z\).
In this setting, the bias between the IV and proximal estimator depends
only on the exclusion restriction of the instrument \(W_Z\).

\begin{corollary}[]\protect\hypertarget{cor-relative-bias-prox}{}\label{cor-relative-bias-prox}

Assume the IV is exogenous and the outcome proxy is valid. Then, when
the following holds:
\begin{equation}\protect\phantomsection\label{eq-rel-bias-prox}{
S_{W_Z, Z} \cdot \frac{\sd(W_Z^{\bot \hat W_Y})}{\sd(W_Z^{\bot W_Y})} \cdot \frac{\sd(Z^{\bot W_Y})}{\sd(Z^{\bot \hat W_Y})} \cdot \cor(W_Z^{\bot W_Y}, Z^{\bot W_Y}) \leq 1,
}\end{equation} the proximal estimator will incur less bias from an
exclusion restriction violation.

\end{corollary}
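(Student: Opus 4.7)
The plan is to start from the two bias expressions already derived in Theorems~\ref{thm-iv} and~\ref{thm-prox-bias}, impose the two simplifying assumptions of the corollary (exogenous instrument and valid outcome proxy), and then take the ratio of the two biases. Because both biases turn out to be proportional to the same exclusion-restriction coefficient $\beta_{w_z}$, the ratio is a purely observable quantity and the stated inequality reduces to a one-line comparison.

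First, I would specialize Theorem~\ref{thm-iv} by setting $\varphi_u=0$, which kills the exogeneity term in the original coefficient expression $\beta_{w_z}+\beta_u\varphi_u$ and leaves
\[
\bias(\tau_\iv)=\frac{1}{R_{Z\sim W_Z\mid W_Y}}\cdot\frac{\sd(W_Z^{\bot W_Y})}{\sd(Z^{\bot W_Y})}\cdot\beta_{w_z}.
\]
Next, I would specialize Theorem~\ref{thm-prox-bias} by setting $\alpha_{w_z}=0$ (valid outcome proxy). The coefficient term $\beta_{w_z}-\beta_u\alpha_{w_z}/\alpha_u$ collapses to $\beta_{w_z}$, and so
\[
\bias(\tau_\prox)=S_{W_Z,Z}\cdot\beta_{w_z}\cdot\frac{\sd(W_Z^{\bot\hat W_Y})}{\sd(Z^{\bot\hat W_Y})}.
\]

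The main (and essentially only) step is then to form the ratio and identify the partial correlation. Dividing the proximal bias by the IV bias, the factor $\beta_{w_z}$ cancels, yielding
\[
\frac{\bias(\tau_\prox)}{\bias(\tau_\iv)}
= S_{W_Z,Z}\cdot R_{Z\sim W_Z\mid W_Y}\cdot\frac{\sd(W_Z^{\bot\hat W_Y})}{\sd(W_Z^{\bot W_Y})}\cdot\frac{\sd(Z^{\bot W_Y})}{\sd(Z^{\bot\hat W_Y})}.
\]
Recognizing $R_{Z\sim W_Z\mid W_Y}=\cor(W_Z^{\bot W_Y},Z^{\bot W_Y})$ shows that the right-hand side is exactly the left-hand side of Eq.~\ref{eq-rel-bias-prox}. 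Hence the inequality in~Eq.~\ref{eq-rel-bias-prox} is equivalent to $|\bias(\tau_\prox)|\le|\bias(\tau_\iv)|$, as claimed.

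I do not anticipate any serious obstacles here, since both heavy-lifting bias decompositions are already in hand: the only real work is the algebraic cancellation and the recognition of $R_{Z\sim W_Z\mid W_Y}$ in the ratio. If anything, the subtlety to be careful about is sign/directionality---the $S_{W_Z,Z}\in\{-1,+1\}$ factor must be tracked so that the ``less biased'' statement is interpreted in terms of absolute bias, which is what the inequality in Eq.~\ref{eq-rel-bias-prox} implicitly encodes once one takes absolute values of both sides.
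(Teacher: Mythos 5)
Your proposal is correct and follows exactly the intended derivation: setting $\varphi_u=0$ in Theorem~\ref{thm-iv} and $\alpha_{w_z}=0$ in Theorem~\ref{thm-prox-bias} leaves both biases proportional to $\beta_{w_z}$, and the ratio of the two reduces to the observable quantity in Eq.~\ref{eq-rel-bias-prox} once $R_{Z\sim W_Z\mid W_Y}$ is recognized as $\cor(W_Z^{\bot W_Y}, Z^{\bot W_Y})$. Your closing caveat about signs is well taken --- the comparison is properly one of absolute biases, so the inequality should be read with absolute values (otherwise a large negative ratio would satisfy it vacuously) --- but that is a feature of the corollary's statement rather than a gap in your argument.
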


Eq.~\ref{eq-rel-bias-prox} comprises completely observable components.
This means in practice, researchers can directly calculate the sample
analog of each component to evaluate whether the ratio is greater than
1. When it is greater than 1, this implies that an IV estimator will
incur less bias under violations of the exclusion restriction. In
contrast, when it is less than 1, then this implies that a proximal
estimator will incur less bias under violations of the exclusion
restriction.

\subsection{\texorpdfstring{Partial \(R^2\) representation of
bias}{Partial R\^{}2 representation of bias}}\label{sec-app-r2}

In the main manuscript, we focus on the coefficient representations of
bias. While this allows for a simple and intuitive representation of
bias in terms of the underlying structural models, it can be difficult
to provide plausibility arguments about the magnitude of the different
coefficients.

Following \citet{cinelli2020making}, we can re-write the bias for each
identification strategy in terms of partial \(R^2\) values. To start, we
provide the bias of \(\tau_\soo\) in terms of partial \(R^2\) values,
which was first presented in \citet{cinelli2020making}:
\[\text{Bias}(\tau_\soo) = \frac{R_{Y\sim U\mid Z,W_Z, W_Y} \cdot R_{Z \sim U \mid W_Z,W_Y}}{
    \sqrt{1-R^2_{Z \sim U \mid W_Z,W_Y}}} \cdot
    \frac{\sd(Y^{\bot Z,W_Z, W_Y})}{\sd(Z^{\bot W_Z,W_Y})}.\]

Following the same approach of transforming the regression coefficients
into partial \(R^2\) values, we can similarly re-write the bias of
\(\tau_\iv\) and \(\tau_\prox\) in terms of partial \(R^2\) values,
formalized in the following corollaries.

\begin{corollary}[]\protect\hypertarget{cor-iv-r2}{}\label{cor-iv-r2}

The bias of \(\tau_\iv\) can be re-expressed in terms of partial \(R^2\)
values as: \begin{align*} 
\text{Bias}(\tau_\iv) =& \frac{1}{R_{W_Z \sim Z \mid W_Y}} \cdot \frac{\sd(Y^{\bot W_Y, Z})}{\sd(Z^{\bot W_Y})} \cdot \left\{ \hliv{R_{Y \sim W_Z \mid W_Y, Z, U}} \sqrt{\frac{1-R^2_{Y \sim U \mid W_Y, Z}}{1-R^2_{W_Z \sim U \mid W_Y, Z}} \cdot \frac{1}{1-R^2_{Z \sim W_Z \mid W_Y}}} \right. \\
&\quad +\left. \hlsoo{R_{Y\sim U \mid W_Z, W_Y, Z}} \hliv{\cdot R_{W_Z \sim U \mid W_Y}}\sqrt{\frac{1}{1-{R^2_{Z \sim U \mid W_Y, W_Z}}} \cdot \frac{1}{1-R^2_{W_Z \sim U \mid W_Y}} \cdot R^2_{Y \sim W_Z \mid W_Y, Z}}\right\},
\end{align*} where the highlighted terms directly correspond to the
scale-invariant analog of the coefficients in the bias expression for
\(\tau_\iv\) in Theorem~\ref{thm-iv}.

\end{corollary}

\begin{corollary}[]\protect\hypertarget{cor-prox-r2}{}\label{cor-prox-r2}

The bias of \(\tau_\prox\) can be re-expressed in terms of partial
\(R^2\) values as: \begin{align*}
\text{Bias}&(\tau_\prox) \\
=& S_{W_Z, Z}  \cdot \hliv{R_{Y \sim W_Z \mid W_Y, Z, U}} \sqrt{\frac{\var(Y^{\bot W_Y, Z})}{\var(Z^{\bot \hat W_Y})} \cdot \frac{\var(W_Z^{\bot \hat W_Y})}{\var(W_Z^{\bot W_Y})} \cdot \frac{(1-R^2_{Y \sim U \mid W_Y, Z})}{(1-R^2_{W_Z \sim U \mid W_Y, Z})(1-R^2_{Z \sim W_Z \mid W_Y})}}\\
&- \frac{\hlsoo{R_{Y \sim U \mid W_Z, W_Y, Z}}}{R_{W_Y \sim U}} \cdot \frac{1}{\var(W_Y)} \sqrt{\var(Y) \cdot \var(Z) \cdot \frac{1-R^2_{Y \sim W_Y, W_Z, Z}}{(1-R^2_{W_Z \sim U \mid W_Y, Z})(1-R^2_{U \sim W_Y, Z})}} \\
&\quad \times \left( \hlprox{R_{Z \sim W_Y \mid U, W_Z}} \cdot \sqrt{\frac{1-R^2_{Z \sim U, W_Z}}{1-R^2_{W_Y \sim U, W_Z}}} + R_{Z \sim W_Z \mid U, W_Y} \cdot \hlprox{R_{W_Z \sim W_Y \mid U}} \cdot \sqrt{\frac{1-R^2_{Z \sim U, W_Y}}{1-R^2_{W_Y \sim U}} \frac{1- R^2_{W_Z \sim U}}{1-R^2_{W_Z \sim U, W_Y}}}\cdot \var(W_Y) \right) \\
&\quad \quad \times \frac{ \var(W_Y^{\bot \hat W_Y}) - R^2_{W_Y \sim U} \cdot \var(W_Y) (1-R^2_{U \sim \hat W_Y})}{ \var(Z^{\bot \hat W_Y})},
\end{align*} where the highlighted terms correspond to the
scale-invariant analog of the coefficients in the bias expression for
\(\tau_\prox\) in Theorem~\ref{thm-prox-bias}.

\end{corollary}

Unlike the partial \(R^2\) representation for \(\tau_\soo\), the bias
expressions for \(\tau_\iv\) and \(\tau_\prox\) in terms of partial
\(R^2\) values are more complicated, and involve several additional
partial \(R^2\) values that do not have a direct connection to the IV
and proximal assumptions. Notably, the parameterization is not variation
independent, which complicates interpretation. This motivates the
\(\rho\) parameterization used in Section 4 of the main manuscript,
which does allow for variation independent evaluation of the bias in
terms of four individual parameters. Furthermore, we can use the
\(\rho\) parameterization to back out the scale-invariant partial
\(R^2\) values that correspond to the coefficients in the bias
expressions for \(\tau_\iv\) and \(\tau_\prox\).

\section{Additional sensitivity analysis and
benchmarking}\label{sec-app-sens}

\subsection{Detailed benchmarking values for individual protest
outcome}\label{detailed-benchmarking-values-for-individual-protest-outcome}

\footnotesize

{

\begin{longtable}[]{@{}lrrrrrr@{}}

\caption{\label{tbl-bench-ext}Additional benchmarking values for
individual protest outcome. Shown are the partial correlations measuring
each identifying assumption's violation and the benchmarked total
assumption violations. Median absolute deviations for each quantity are
shown in parentheses.}

\tabularnewline

\toprule\noalign{}
Covariate & \(\hlsoo{R_{Z\sim U\mid W_Z,W_Y}}\) &
\(\hlsoo{R_{Y\sim U\mid Z,W_Z,W_Y}}\) & \(\hliv{R_{W_Z\sim U\mid W_Y}}\)
& \(\hliv{R_{Y\sim W_Z\mid Z,W_Y,U}}\) &
\(\hlprox{R_{W_Y\sim Z\mid W_Z,U}}\) &
\(\hlprox{R_{W_Y\sim W_Z\mid U}}\) \\
\midrule\noalign{}
\endhead
\bottomrule\noalign{}
\endlastfoot
Wealth & \(0.0075~~\textcolor{gray}{(0.05)}\) &
\(0.044~~\textcolor{gray}{(0.04)}\) &
\(0.087~~\textcolor{gray}{(0.03)}\) &
\(-0.30~~\textcolor{gray}{(0.06)}\) &
\(-0.016~~\textcolor{gray}{(0.2)}\) &
\(0.01113~~\textcolor{gray}{(0.07)}\) \\
Schools & \(-0.0185~~\textcolor{gray}{(0.02)}\) &
\(0.043~~\textcolor{gray}{(0.04)}\) &
\(0.045~~\textcolor{gray}{(0.04)}\) &
\(-0.30~~\textcolor{gray}{(0.06)}\) &
\(-0.014~~\textcolor{gray}{(0.2)}\) &
\(-4.3\times 10^{-4}~~\textcolor{gray}{(0.07)}\) \\
Industrialization & \(-0.0413~~\textcolor{gray}{(0.03)}\) &
\(0.028~~\textcolor{gray}{(0.05)}\) &
\(0.013~~\textcolor{gray}{(0.06)}\) &
\(-0.30~~\textcolor{gray}{(0.06)}\) &
\(-0.018~~\textcolor{gray}{(0.2)}\) &
\(0.00514~~\textcolor{gray}{(0.07)}\) \\
Log population & \(0.0378~~\textcolor{gray}{(0.05)}\) &
\(-0.212~~\textcolor{gray}{(0.05)}\) &
\(0.038~~\textcolor{gray}{(0.03)}\) &
\(-0.31~~\textcolor{gray}{(0.06)}\) &
\(-0.017~~\textcolor{gray}{(0.2)}\) &
\(0.00438~~\textcolor{gray}{(0.07)}\) \\
Diversity & \(-0.1257~~\textcolor{gray}{(0.05)}\) &
\(0.057~~\textcolor{gray}{(0.05)}\) &
\(0.092~~\textcolor{gray}{(0.07)}\) &
\(-0.31~~\textcolor{gray}{(0.06)}\) &
\(-0.021~~\textcolor{gray}{(0.2)}\) &
\(0.00753~~\textcolor{gray}{(0.07)}\) \\
Katowickie region & \(-0.0130~~\textcolor{gray}{(0.04)}\) &
\(0.199~~\textcolor{gray}{(0.05)}\) &
\(0.051~~\textcolor{gray}{(0.04)}\) &
\(-0.32~~\textcolor{gray}{(0.06)}\) &
\(-0.015~~\textcolor{gray}{(0.2)}\) &
\(-0.00139~~\textcolor{gray}{(0.07)}\) \\
Opolskie region & \(0.0303~~\textcolor{gray}{(0.03)}\) &
\(0.101~~\textcolor{gray}{(0.04)}\) &
\(-0.037~~\textcolor{gray}{(0.04)}\) &
\(-0.30~~\textcolor{gray}{(0.06)}\) &
\(-0.016~~\textcolor{gray}{(0.2)}\) &
\(0.00379~~\textcolor{gray}{(0.07)}\) \\

\end{longtable}

}

\normalsize

\subsection{Sensitivity analysis and benchmarking for group protest
outcome}\label{sensitivity-analysis-and-benchmarking-for-group-protest-outcome}

In the following section, we provide an illustration of our sensitivity
analysis for the group protest outcome. In the main manuscript, we
focused on individual protest as an outcome. In this setting, depending
on which identification strategy was used, the resulting estimates
differed substantively. In the setting of the group protest outcome, all
three approaches yield relatively similar point estimates.

\begin{table}[h]
\centering
\begin{tabular}{>{\raggedright\arraybackslash}p{2.7cm}p{4cm}p{4cm}p{4cm}} \toprule
\textbf{} & \textbf{SOO} & \textbf{IV}  & \textbf{Proximal} \\
\midrule
 Estimate (s.e.) for group protest
  & $\phantom{-}0.347\ \ (0.058)$
  & $\phantom{-}0.347\ \ (0.128)$
  & $\phantom{-}0.348\ \ (0.373)$ \\ \bottomrule 
\end{tabular} 
\caption{Estimates for group protest outcome, across all three identification approaches.}
\end{table}

Table~\ref{tbl-rv-grp}, Table~\ref{tbl-bench-grp}, and
Table~\ref{tbl-bench-grp-ext} show the total robustness value and
benchmarking results for the group protest outcome, paralleling the
results presented in Section~\ref{sec-rv-demo} and the preceding section
for the individual protest outcome.

\setstretch{1.0}

\begin{longtable}[]{@{}lrrll@{}}
\caption[]{Total robustness values and robustness allocations for the
reanalysis of \citet{hk2022poland}, group protest outcome. Median
absolute deviations are shown in
parentheses.}\label{tbl-rv-grp}\tabularnewline
\toprule\noalign{}
Strategy & \(b\) & \(\TRV(b)\) & (std. err.) & Minimizing value of
\(\rho\) \\
\midrule\noalign{}
\endfirsthead
\toprule\noalign{}
Strategy & \(b\) & \(\TRV(b)\) & (std. err.) & Minimizing value of
\(\rho\) \\
\midrule\noalign{}
\endhead
\bottomrule\noalign{}
\endlastfoot
\textbf{SOO} & \(0.347\) & \(0.333\) & \(\textcolor{gray}{(0.0715)}\) &
\((0.0001, 0.0001, 0.5767, 0.5767)\) \\
\textbf{IV} & \(0.347\) & \(0.00304\) &
\(\textcolor{gray}{(3.38\times 10^{-5})}\) &
\((0.0001, 0.0001, 0.9900, 0.0580)\) \\
\textbf{Proximal} & \(0.348\) & \(0.00304\) &
\(\textcolor{gray}{(0.0017)}\) &
\((0.0260, 0.0000, -0.9900, -0.0580)\) \\
\end{longtable}

{

\begin{longtable}[]{@{}lrrrr@{}}

\caption{\label{tbl-bench-grp}Benchmarking values for group protest
outcome. Shown are the benchmarked SOO parameters and the benchmarked
total assumption violations and the benchmarked total assumption
violations. Median absolute deviations for each quantity are shown in
parentheses.}

\tabularnewline

\toprule\noalign{}
Covariate & Change in \(\tau\) &
\(\norm{A_\soo(\hat \rho(j))}_\infty^2\) &
\(\norm{A_\iv(\hat \rho(j))}_\infty^2\) &
\(\norm{A_\prox(\hat \rho(j))}_\infty^2\) \\
\midrule\noalign{}
\endhead
\bottomrule\noalign{}
\endlastfoot
Wealth & \(0.004342~~\textcolor{gray}{(0.012)}\) &
\(0.00550~~\textcolor{gray}{(0.008)}\) &
\(0.00560~~\textcolor{gray}{(0.006)}\) &
\(3.5\times 10^{-4}~~\textcolor{gray}{(0.1)}\) \\
Schools & \(2.2\times 10^{-5}~~\textcolor{gray}{(0.001)}\) &
\(1.6\times 10^{-4}~~\textcolor{gray}{(0.004)}\) &
\(0.00683~~\textcolor{gray}{(0.007)}\) &
\(3.2\times 10^{-4}~~\textcolor{gray}{(0.1)}\) \\
Industrialization & \(0.002716~~\textcolor{gray}{(0.003)}\) &
\(0.00302~~\textcolor{gray}{(0.004)}\) &
\(4.5\times 10^{-4}~~\textcolor{gray}{(0.006)}\) &
\(4.4\times 10^{-4}~~\textcolor{gray}{(0.1)}\) \\
Log population & \(0.050160~~\textcolor{gray}{(0.075)}\) &
\(0.04131~~\textcolor{gray}{(0.020)}\) &
\(0.00139~~\textcolor{gray}{(0.005)}\) &
\(3.4\times 10^{-4}~~\textcolor{gray}{(0.1)}\) \\
Diversity & \(0.010962~~\textcolor{gray}{(0.009)}\) &
\(0.01696~~\textcolor{gray}{(0.010)}\) &
\(0.00843~~\textcolor{gray}{(0.011)}\) &
\(5.1\times 10^{-4}~~\textcolor{gray}{(0.1)}\) \\
Katowickie region & \(0.013613~~\textcolor{gray}{(0.025)}\) &
\(0.01866~~\textcolor{gray}{(0.016)}\) &
\(0.00245~~\textcolor{gray}{(0.005)}\) &
\(3.1\times 10^{-4}~~\textcolor{gray}{(0.1)}\) \\
Opolskie region & \(7.2\times 10^{-4}~~\textcolor{gray}{(0.002)}\) &
\(5.3\times 10^{-4}~~\textcolor{gray}{(0.003)}\) &
\(0.00206~~\textcolor{gray}{(0.005)}\) &
\(3.2\times 10^{-4}~~\textcolor{gray}{(0.1)}\) \\

\end{longtable}

}

\footnotesize

{

\begin{longtable}[]{@{}lrrrrrr@{}}

\caption{\label{tbl-bench-grp-ext}Additional benchmarking values for
group protest outcome. Shown are the partial correlations measuring each
identifying assumption's violation and the benchmarked total assumption
violations. Median absolute deviations for each quantity are shown in
parentheses.}

\tabularnewline

\toprule\noalign{}
Covariate & \(\hlsoo{R_{Z\sim U\mid W_Z,W_Y}}\) &
\(\hlsoo{R_{Y\sim U\mid Z,W_Z,W_Y}}\) & \(\hliv{R_{W_Z\sim U\mid W_Y}}\)
& \(\hliv{R_{Y\sim W_Z\mid Z,W_Y,U}}\) &
\(\hlprox{R_{W_Y\sim Z\mid W_Z,U}}\) &
\(\hlprox{R_{W_Y\sim W_Z\mid U}}\) \\
\midrule\noalign{}
\endhead
\bottomrule\noalign{}
\endlastfoot
Wealth & \(-0.0170~~\textcolor{gray}{(0.07)}\) &
\(-0.074~~\textcolor{gray}{(0.07)}\) &
\(0.075~~\textcolor{gray}{(0.03)}\) &
\(0.00529~~\textcolor{gray}{(0.05)}\) &
\(-0.019~~\textcolor{gray}{(0.2)}\) &
\(-0.0091~~\textcolor{gray}{(0.06)}\) \\
Schools & \(0.0016~~\textcolor{gray}{(0.03)}\) &
\(0.013~~\textcolor{gray}{(0.08)}\) &
\(0.083~~\textcolor{gray}{(0.05)}\) &
\(-0.00120~~\textcolor{gray}{(0.05)}\) &
\(-0.018~~\textcolor{gray}{(0.2)}\) &
\(-0.0151~~\textcolor{gray}{(0.06)}\) \\
Industrialization & \(-0.0401~~\textcolor{gray}{(0.03)}\) &
\(-0.055~~\textcolor{gray}{(0.04)}\) &
\(0.021~~\textcolor{gray}{(0.06)}\) &
\(0.00177~~\textcolor{gray}{(0.05)}\) &
\(-0.021~~\textcolor{gray}{(0.2)}\) &
\(-0.0106~~\textcolor{gray}{(0.07)}\) \\
Log population & \(0.0574~~\textcolor{gray}{(0.07)}\) &
\(0.203~~\textcolor{gray}{(0.05)}\) &
\(0.037~~\textcolor{gray}{(0.03)}\) &
\(-0.00193~~\textcolor{gray}{(0.05)}\) &
\(-0.018~~\textcolor{gray}{(0.2)}\) &
\(-0.0126~~\textcolor{gray}{(0.07)}\) \\
Diversity & \(-0.1302~~\textcolor{gray}{(0.05)}\) &
\(-0.073~~\textcolor{gray}{(0.05)}\) &
\(0.092~~\textcolor{gray}{(0.07)}\) &
\(0.00993~~\textcolor{gray}{(0.05)}\) &
\(-0.023~~\textcolor{gray}{(0.2)}\) &
\(-0.0090~~\textcolor{gray}{(0.07)}\) \\
Katowickie region & \(-0.0327~~\textcolor{gray}{(0.07)}\) &
\(-0.137~~\textcolor{gray}{(0.07)}\) &
\(0.049~~\textcolor{gray}{(0.04)}\) &
\(0.00785~~\textcolor{gray}{(0.05)}\) &
\(-0.014~~\textcolor{gray}{(0.1)}\) &
\(-0.0177~~\textcolor{gray}{(0.07)}\) \\
Opolskie region & \(0.0198~~\textcolor{gray}{(0.03)}\) &
\(0.023~~\textcolor{gray}{(0.05)}\) &
\(-0.045~~\textcolor{gray}{(0.04)}\) &
\(8.8\times 10^{-4}~~\textcolor{gray}{(0.05)}\) &
\(-0.018~~\textcolor{gray}{(0.2)}\) &
\(-0.0124~~\textcolor{gray}{(0.06)}\) \\

\end{longtable}

}

\normalsize

\setstretch{\papersp}

\newpage

\section{Mean-squared error contour plot}\label{sec-app-mse}

Figure~\ref{fig-contour-mse} is analogous to Figure~\ref{fig-contour}:
the plotted contours are the SOO bias. The contours are colored and
divided however, according to which estimator has the lowest
mean-squared error. Because in this case the three estimators have
similar variance, the plot is very similar to Figure~\ref{fig-contour}.

\begin{figure}

\centering{

\includegraphics[width=6.25in,height=\textheight,keepaspectratio]{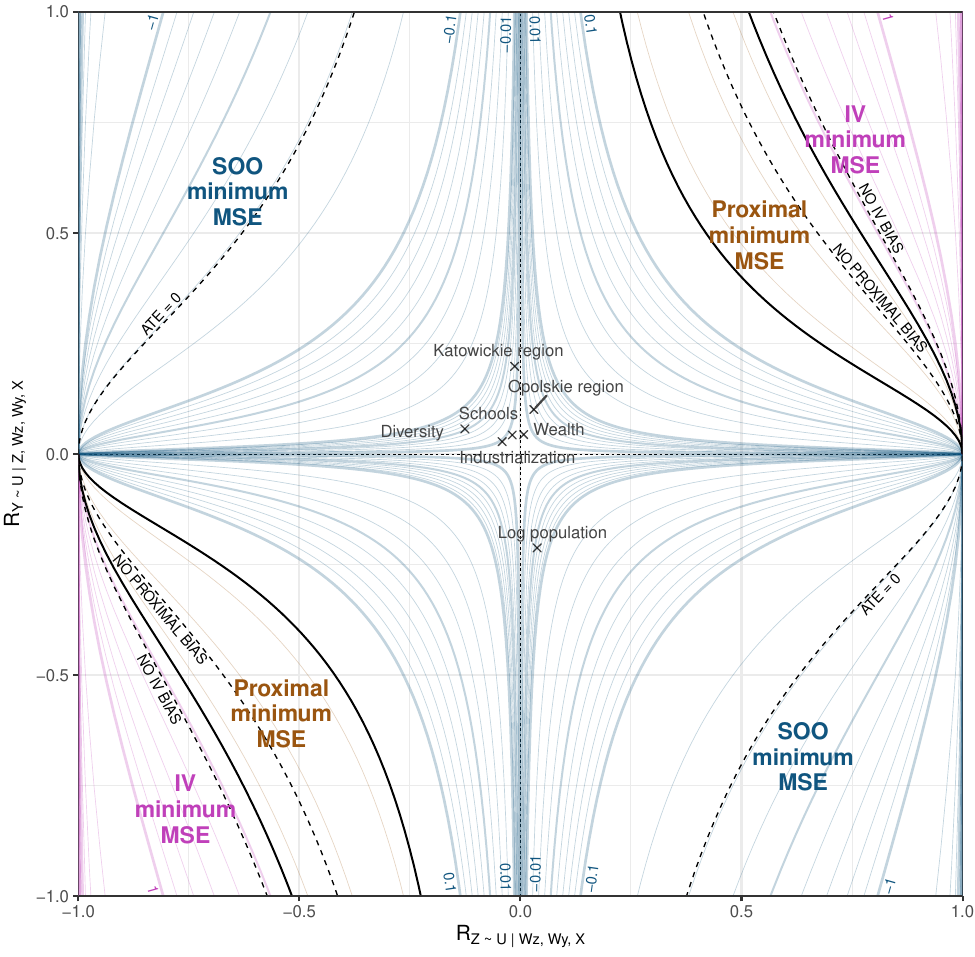}

}

\caption{\label{fig-contour-mse}Sensitivity plot for the running
example. The contour lines are on a logarithmic scale and show the
amount of SOO confounding bias. Contours are colored according to the
regions where each estimator has the lowest MSE. Gray crosses indicate
benchmark sensitivity parameter values for each of the covariates.}

\end{figure}%

\FloatBarrier

\section{Proofs}\label{sec-app-proofs}

For clarity in what follows, we abuse notation and manipulate random
variables as if they were random vectors of sufficient length \(N\), and
then implicitly send \(N\to\infty\) when converting back to statements
in terms of variances, covariances, etc.

\begin{lemma}[]\protect\hypertarget{lem-wy-y}{}\label{lem-wy-y}

\(\cov(W_Y^{\bot \hat W_Y}, Z^{\bot \hat W_Y})=0\).

\end{lemma}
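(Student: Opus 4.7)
The plan is to exploit the defining orthogonality properties of the OLS residual directly, rather than expanding anything in terms of the structural model in Eq.~\ref{eq-struct}. The whole statement is essentially a projection-geometry fact about OLS residuals, and I expect the proof to be short once the right decomposition is in place.

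First, I would recall how $\hat W_Y$ is constructed: by Proposition~\ref{prp-prox-valid}, $\hat W_Y$ is the population regression of $W_Y$ on $Z$ and $W_Z$, so $\hat W_Y$ lies in the linear span of $\{Z, W_Z\}$, and the residual $\eta \coloneq W_Y - \hat W_Y$ is (by the normal equations) uncorrelated with both $Z$ and $W_Z$, hence with any linear combination of them. In particular, $\cov(\eta, Z) = 0$ and $\cov(\eta, \hat W_Y) = 0$.

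Next, I would compute the two residuals appearing in the claim. Since $W_Y = \hat W_Y + \eta$ with $\cov(\eta, \hat W_Y) = 0$, the linear projection of $W_Y$ onto $\hat W_Y$ is exactly $\hat W_Y$, so
\begin{equation*}
W_Y^{\bot \hat W_Y} = W_Y - \hat W_Y = \eta.
\end{equation*}
For $Z^{\bot \hat W_Y}$, write $Z^{\bot \hat W_Y} = Z - c\, \hat W_Y$ with $c = \cov(Z, \hat W_Y)/\var(\hat W_Y)$, the precise value of $c$ being irrelevant.

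Finally, I would compute
\begin{equation*}
\cov(W_Y^{\bot \hat W_Y}, Z^{\bot \hat W_Y}) = \cov(\eta, Z) - c\,\cov(\eta, \hat W_Y) = 0 - c \cdot 0 = 0,
\end{equation*}
using the two orthogonality properties noted above. The main ``obstacle'' is really just being careful with the notation and not confusing $\hat W_Y$ (a first-stage fitted value, a linear combination of $Z$ and $W_Z$) with $W_Y$ itself; once the decomposition $W_Y = \hat W_Y + \eta$ with $\eta$ orthogonal to the regressors is written down, the conclusion is immediate.
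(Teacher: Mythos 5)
Your proof is correct and follows essentially the same route as the paper's: both arguments rest on the facts that the first-stage residual $W_Y - \hat W_Y$ is orthogonal to the regressor $Z$ (the normal equations) and that projecting $W_Y$ onto $\hat W_Y$ recovers $\hat W_Y$ itself. The paper phrases this with projection matrices and idempotence, whereas you work directly with covariances, but the substance is identical.
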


\begin{proof}
Let \(\tilde W_Z = \mqty[Z & W_Z]\), and \(P_{\tilde W_Z}\) and
\(P_{\hat W_Y}\) be the projection matrices onto \(\tilde W_Z\) and
\(\hat W_Y\), respectively. Then \begin{align*}
P_{\hat W_Y} W_Y &= P_{\tilde W_Z} W_Y (W_Y P_{\tilde W_Z}^\top P_{\tilde W_Z} W_Y)^{-1} W_Y^\top P_{\tilde W_Z}^\top W_Y \\
&= P_{\tilde W_Z} W_Y = \hat W_Y.
\end{align*} Then \begin{align*}
\cov(W_Y^{\bot \hat W_Y}, Z^{\bot \hat W_Y}) &= \qty{(I- P_{\hat W_Y}) W_Y}^\top \qty{(I- P_{\hat W_Y}) Z} \\
&= W_Y^\top (I- P_{\hat W_Y})^\top Z  \\
&= (W_Y - \hat W_Y)^\top Z \\
&= 0,
\end{align*} since least-squares residuals are orthogonal to the
predictors (of which \(Z\) is one).
\end{proof}

\begin{lemma}[]\protect\hypertarget{lem-soo}{}\label{lem-soo}

Let \(\tau_\soo\) be the estimated coefficient in front of \(Z\) with
the regression specification \(Y \sim Z, W_Y\). Then, the bias that
arises from omitting \(U\) is: \[
\bias(\tau_\soo)
= \beta_u \frac{\cov(U^{\bot W_Z,W_Y}, Z^{\bot W_Z,W_Y})}{\var(Z^{\bot W_Z,W_Y})}
= \frac{R_{Y\sim U\mid Z,W_Z, W_Y} \cdot R_{Z \sim U \mid W_Z,W_Y}}{
    \sqrt{1-R^2_{Z \sim U \mid W_Z,W_Y}}} \cdot
    \frac{\sd(Y^{\bot Z,W_Z, W_Y})}{\sd(Z^{\bot W_Z,W_Y})}
\]

\end{lemma}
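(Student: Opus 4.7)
The plan is to apply the Frisch--Waugh--Lovell (FWL) theorem to isolate $\tau_\soo$, substitute the structural equation for $Y$ from Eq.~\ref{eq-struct}, and then convert covariances to the partial-correlation form using the standard identity that partialing out one additional variable shrinks the residual variance by $1-R^2$.

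First, by FWL applied to the regression $Y \sim Z + W_Z + W_Y$,
\[
\tau_\soo = \frac{\cov(Y^{\bot W_Z, W_Y},\, Z^{\bot W_Z, W_Y})}{\var(Z^{\bot W_Z, W_Y})}.
\]
Substituting the structural equation $Y = \beta_u U + \beta_{w_y} W_Y + \beta_{w_z} W_Z + \tau Z + \eps_y$ and partialing out $W_Z, W_Y$, the $W_Y$ and $W_Z$ terms vanish and $\eps_y$ is independent of $(Z, W_Z, W_Y)$. Linearity of covariance yields
\[
\tau_\soo = \tau + \beta_u \cdot \frac{\cov(U^{\bot W_Z, W_Y},\, Z^{\bot W_Z, W_Y})}{\var(Z^{\bot W_Z, W_Y})},
\]
which establishes the first equality for $\bias(\tau_\soo) = \tau_\soo - \tau$.

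For the second equality, I would translate covariances into partial correlations and standard deviations in two steps. Writing $\cov(U^{\bot W_Z, W_Y}, Z^{\bot W_Z, W_Y}) = R_{Z\sim U\mid W_Z, W_Y}\,\sd(U^{\bot W_Z,W_Y})\,\sd(Z^{\bot W_Z,W_Y})$ gives a factor of $R_{Z\sim U\mid W_Z,W_Y}$ and simplifies one $\sd(Z^{\bot W_Z,W_Y})$ in the denominator. Then I need to eliminate $\beta_u$, which I would do by viewing $\beta_u$ as the coefficient on $U$ in the (hypothetical) oracle regression $Y \sim Z + W_Z + W_Y + U$: by FWL again, $\beta_u = R_{Y\sim U\mid Z,W_Z,W_Y}\cdot \sd(Y^{\bot Z,W_Z,W_Y})/\sd(U^{\bot Z,W_Z,W_Y})$. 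Combining the two gives the product
\[
R_{Y\sim U\mid Z,W_Z,W_Y}\cdot R_{Z\sim U\mid W_Z, W_Y}\cdot \frac{\sd(Y^{\bot Z,W_Z,W_Y})}{\sd(Z^{\bot W_Z,W_Y})} \cdot \frac{\sd(U^{\bot W_Z,W_Y})}{\sd(U^{\bot Z,W_Z,W_Y})}.
\]

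The last step is to simplify the ratio of residual standard deviations of $U$. Using the standard identity
\[
\var(U^{\bot Z,W_Z,W_Y}) = \var(U^{\bot W_Z,W_Y})\bigl(1 - R^2_{U\sim Z\mid W_Z,W_Y}\bigr),
\]
which follows from regressing $U^{\bot W_Z,W_Y}$ on $Z^{\bot W_Z,W_Y}$, the ratio becomes $1/\sqrt{1-R^2_{Z\sim U\mid W_Z,W_Y}}$, yielding the claimed expression. The only mildly delicate point is this last identity: I would record it explicitly (one-line derivation via FWL) rather than invoke it as folklore, since it is where the asymmetry between the numerator $R_{Z\sim U\mid W_Z,W_Y}$ and the denominator $\sqrt{1-R^2_{Z\sim U\mid W_Z,W_Y}}$ enters. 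Everything else is bookkeeping.
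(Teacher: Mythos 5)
Your proposal is correct and follows essentially the same route as the paper's proof: both express $\beta_u$ as the FWL coefficient on $U$ in the oracle regression $Y \sim Z + W_Z + W_Y + U$, convert covariances to partial correlations, and use the residual-variance identity $\var(U^{\bot Z,W_Z,W_Y}) = \var(U^{\bot W_Z,W_Y})(1-R^2_{U\sim Z\mid W_Z,W_Y})$ to produce the $\sqrt{1-R^2_{Z\sim U\mid W_Z,W_Y}}$ denominator. The only difference is that you additionally derive the first equality from the structural equation, which the paper takes as its starting point.
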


\begin{proof}
This result follows directly from \citet{cinelli2020making}. We provide
the full derivation for reference: \begin{align*}
\bias(\hat\tau_\soo) &=  \beta_u \frac{\cov(U^{\bot W_Y}, Z^{\bot W_Y})}{\var(Z^{\bot W_Y})}\\
&= \frac{\cov(U^{\bot W_Z,W_Y}, Z^{\bot W_Z,W_Y})}{\var(Z^{\bot W_Z,W_Y})}
\cdot \frac{\cov(Y^{\bot Z,W_Z, W_Y}, U^{\bot Z, W_Z,W_Y})}{\var(U^{\bot Z,W_Z, W_Y})} \\
&= \frac{\cor(U^{\bot W_Z,W_Y}, Z^{\bot W_Z,W_Y}) \cdot
\sd(U^{\bot W_Z,W_Y})}{\sd(Z^{\bot W_Z,W_Y})} \cdot
\frac{\cor(Y^{\bot Z, W_Z,W_Y}, U^{\bot Z, W_Z,W_Y}) \cdot
\sd(Y^{\bot Z,W_Z, W_Y})}{\sd(U^{\bot Z,W_Z, W_Y})}\\
&= R_{Z\sim U\mid W_Z,W_Y}R_{Y\sim U\mid Z, W_Z,W_Y} \cdot \frac{\sd(U^{\bot W_Z, W_Y})}{\sd(U^{\bot Z,W_Z, W_Y})} \cdot \frac{\sd(Y^{\bot Z,W_Z, W_Y})}{\sd(Z^{\bot W_Z, W_Y})}\\
&= \frac{R_{Y\sim U\mid Z,W_Z, W_Y} \cdot R_{Z \sim U \mid W_Z,W_Y}}{
    \sqrt{1-R^2_{Z \sim U \mid W_Z,W_Y}}} \cdot
    \frac{\sd(Y^{\bot Z,W_Z, W_Y})}{\sd(Z^{\bot W_Z,W_Y})}
\end{align*}
\end{proof}

\subsection{\texorpdfstring{Proposition~\ref{prp-prox-valid}}{Proposition~}}\label{proposition-prp-prox-valid}

\begin{proof}
As in the main text, use Frisch-Waugh-Lovell to write
\(\tau_{\prox}-\tau\) as \begin{align*}
\tau_{\prox} -\tau &=
    \beta_u \frac{\cov(U^{\bot \hat W_Y}, Z^{\bot \hat W_Y})}{\var(Z^{\bot \hat W_Y})} +
    \beta_{w_y} \frac{\cov(W_Y^{\bot \hat W_Y}, Z^{\bot \hat W_Y})}{\var(Z^{\bot \hat W_Y})} +
    \beta_{w_z} \frac{\cov(W_Z^{\bot \hat W_Y}, Z^{\bot \hat W_Y})}{\var(Z^{\bot \hat W_Y})}.
\end{align*}

The second covariance is zero by Lemma~\ref{lem-wy-y}. Moreover,
Assumption~\ref{asm-prox-z} is equivalent to \(\beta_{w_z}=0\), so the
third term is zero.

We next show \(\cov(U^{\bot \hat W_Y}, Z^{\bot \hat W_Y})=0\). First,
notice that \begin{equation}
\cov(W_Z,\eps_{w_y})=\varphi_u\cov(U,\eps_{w_y})+\varphi_{w_y}\cov(W_Y,\eps_{w_y}) + \cov(\eps_{w_z}, \eps_{w_y})=\varphi_{w_y}\var(\eps_{w_y}).
\end{equation} Under Assumption~\ref{asm-prox-y}, \(\varphi_{w_y} = 0\),
so \(\cov(W_Z, \eps_{w_y})=0\). Similarly, \begin{equation}
\cov(Z,\eps_{w_y})=\gamma_u\cov(U,\eps_{w_y})+\gamma_{w_y}\cov(W_Y,\eps_{w_y})+\gamma_{w_z}\cov(W_Z,\eps_{w_y}) + \cov(\eps_{z}, \eps_{w_y})
=(\gamma_{w_y} + \gamma_{w_z}\varphi_{w_y})\var(\eps_{w_y}).
\label{eqn:cov_z_eps_wy}
\end{equation}

Under Assumption~\ref{asm-prox-y}, \(\gamma_{w_y}=0\) and
\(\varphi_{w_y} = 0\), so \(\cov(Z, \eps_{w_y})=0\).

Then let \(\tilde W_Z = \mqty[Z & W_Z]\), so we can write \(\hat W_Y\)
as \[
\hat W_Y = \alpha_u P_{\tilde W_Z} U + P_{\tilde W_Z} \eps_{w_y} = \alpha_u P_{\tilde W_Z} U
\] where \(P_{\tilde W_Z}\) represents the projection matrix onto
\(\tilde W_Z\), and \(P_{\tilde W_Z} \eps_{w_y} = 0\) since we haev
shown that the covariance between \(\eps_{w_y}\) and both \(W_Z\) and
\(Z\) are zero. Notice that \(P_{\tilde W_Z}Z=Z\), since \(Z\) is in the
column space of \(\tilde W_Z\) by construction. Thus we can write (with
the \(\alpha_u\) cancelling) \begin{align*}
P_{\hat W_Y} Z &= P_{\tilde W_Z} U \big(U^\top P_{\tilde W_Z}^\top P_{\tilde W_Z} U \big)^{-1} U^\top P_{\tilde W_Z}^\top Z \\
&= P_{\tilde W_Z} U \big(U^\top P_{\tilde W_Z} U \big)^{-1} U^\top Z.
\end{align*} Then: \begin{align*}
\cov(U^{\bot \hat W_Y}, Z^{\bot \hat W_Y}) &= \qty{ (I- P_{\hat W_Y}) U}^\top \qty{(I- P_{\hat W_Y}) Z} \\
&= U^\top (I- P_{\hat W_Y}) Z \\
&= U^\top Z - U^\top P_{\hat W_Y} Z) \\
&= U^\top Z - U^\top P_{\tilde W_Z} U \big(U^\top P_{\tilde W_Z} U \big)^{-1} U^\top Z \\
&= 0. \qedhere
\end{align*}
\end{proof}

\subsection{\texorpdfstring{Theorem~\ref{thm-iv}}{Theorem~}}\label{theorem-thm-iv}

\begin{proof}
By Frisch-Waugh-Lovell and Lemma~\ref{lem-wy-y}, we can write the bias
in \(\tau_\iv\) as
\begin{equation}\protect\phantomsection\label{eq-bias-iv-1}{
\tau_\iv - \tau = \underbrace{\beta_{w_z} \cdot \frac{\cov(\hat Z^{\bot W_Y}, W_Z^{\bot W_Y})}{\var(\hat Z^{\bot W_Y})}}_{(1)} + \underbrace{\beta_u \frac{\cov(\hat Z^{\bot W_Y}, U^{\bot W_Y})}{\var(\hat Z^{\bot W_Y})}}_{(2)}.
}\end{equation}

Term 1 represents violations in the exclusion restriction. We can
re-write term 1 as follows, where \(\hat \gamma\) represent the
coefficient on \(W_Z\) in the regression of \(Z\) on \(W_Z\) and
\(W_Y\): \begin{align*}
\beta_{w_z} \cdot \frac{\cov(\hat Z^{\bot W_Y}, W_Z^{\bot W_Y})}{\var(\hat Z^{\bot W_Y})}
&= \beta_{w_z} \cdot \frac{\cov((\hat\gamma W_Z)^{\bot W_Y}, W_Z^{\bot W_Y})}{\var((\hat\gamma W_Z)^{\bot W_Y})} \\
&= \beta_{w_z} \cdot \frac{\hat \gamma  \cdot \cov(W_Z^{\bot W_Y}, W_Z^{\bot W_Y})}{\hat \gamma^2 \var(W_Z^{\bot W_Y})}\\
&= \beta_{w_z} \cdot \frac{1}{\hat \gamma} \\
&= \beta_{w_z} \cdot \frac{\var(W_Z^{\bot W_Y})}{\cov(W_Z^{\bot W_Y}, Z^{\bot W_Y})} \\
&= \beta_{w_z} \cdot \frac{1}{\cor(W_Z^{\bot W_Y}, Z^{\bot W_Y})} \cdot \frac{\sd(W_Z^{\bot W_Y})}{\sd(Z^{\bot W_Y})}
\end{align*}

Term 2 represents violations in instrument exogeneity: \begin{align*}
\beta_u \frac{\cov(\hat Z^{\bot W_Y}, U^{\bot W_Y})}{\var(\hat Z^{\bot W_Y})}
&= \beta_u \cdot \frac{1}{\hat \gamma} \cdot \frac{\cov( W_Z^{\bot W_Y}, U^{\bot W_Y})}{\var( W_Z^{\bot W_Y})} \\
&= \beta_u \cdot \frac{\var(W_Z^{\bot W_Y})}{\cov(W_Z^{\bot W_Y}, Z^{\bot W_Y})}\cdot \frac{\cov( W_Z^{\bot W_Y}, U^{\bot W_Y})}{\var( W_Z^{\bot W_Y})} \\
&= \beta_u \cdot \frac{\cov(W_Z^{\bot W_Y}, U^{\bot W_Y})}{\cov(W_Z^{\bot W_Y}, Z^{\bot W_Y})} \\
&= \beta_u \cdot \frac{\cor(W_Z^{\bot W_Y}, U^{\bot W_Y}) \sd(U^{\bot W_Y})}{\cor(W_Z^{\bot W_Y}, Z^{\bot W_Y}) \sd(Z^{\bot W_Y})}
\end{align*} The correlation term (i.e.,
\(\cor(W_Z^{\bot W_Y}, U^{\bot W_Y})\)) will equal zero if IV exogeneity
holds.

\noindent Putting it together: \begin{align*}
\tau_\iv - \tau  =&\beta_{w_z} \cdot \frac{1}{\cor(W_Z^{\bot W_Y}, Z^{\bot W_Y})} \cdot \frac{\sd(W_Z^{\bot W_Y})}{\sd(Z^{\bot W_Y})} + \beta_u \cdot \frac{\cor(W_Z^{\bot W_Y}, U^{\bot W_Y}) \sd(U^{\bot W_Y})}{\cor(W_Z^{\bot W_Y}, Z^{\bot W_Y}) \sd(Z^{\bot W_Y})}\\
=& \frac{1}{R_{W_Z \sim Z \mid W_Y}} \cdot \frac{\sd(W_Z^{\bot W_Y})}{\sd(Z^{\bot W_Y})}\left( \beta_{w_z} + \beta_u \cdot \varphi_u \right)
\end{align*}
\end{proof}

\subsection{\texorpdfstring{Corollary~\ref{cor-iv-r2}}{Corollary~}}\label{corollary-cor-iv-r2}

\begin{proof}
We can then re-write \(\beta_{w_z}\) and \(\beta_u\) in terms of \(R^2\)
values. Throughout, we will make use of the following decomposition for
partial \(R^2\) values: \begin{align*}
R^2_{Y \sim A \mid B, C} = \frac{R^2_{Y \sim A, B, C} - R^2_{Y \sim B, C}}{1-R^2_{Y \sim B, C}} \implies R^2_{Y \sim A, B, C} = R^2_{Y \sim A \mid B, C}(1-R^2_{Y \sim B, C}) + R^2_{Y \sim B, C}
\end{align*}

\begin{align}
\beta_{w_z} &= \frac{\cov(W_Z^{\bot W_Y, U, Z}, Y^{\bot W_Y, U, Z})}{\var(W_Z^{\bot W_Y, U, Z})} \nonumber \\
&= R_{W_Z \sim Y \mid W_Y, U, Z} \cdot \frac{\sd(Y^{\bot W_Y, U, Z})}{\sd(W_Z^{\bot W_Y, U, Z})} \nonumber \\
&= R_{Y \sim W_Z \mid W_Y, Z, U} \sqrt{\frac{\var(Y)}{\var(W_Z)} \cdot \frac{(1-R^2_{Y \sim U \mid W_Y, Z}) (1-R^2_{Y \sim W_Y, Z})}{(1-R^2_{W_Z \sim U \mid W_Y, Z})(1-R^2_{W_Z \sim W_Y, Z})}}
\label{eqn:beta_wz_r2_1}\\
&= R_{Y \sim W_Z \mid W_Y, Z, U} \sqrt{\frac{\var(Y^{\bot W_Y, Z})}{\var(W_Z^{\bot W_Y})} \cdot \frac{(1-R^2_{Y \sim U \mid W_Y, Z})}{(1-R^2_{W_Z \sim U \mid W_Y, Z})(1-R^2_{Z \sim W_Z \mid W_Y})}}
\label{eqn:beta_wz_r2_2}
\end{align}

We can re-write \(\beta_u \cdot \varphi_u\) as: \begin{align*}
\beta_u &\cdot \varphi_u \\
=& R_{Y\sim U \mid W_Z, W_Y, Z} \cdot R_{W_Z \sim U \mid W_Y} \cdot \frac{\sd(Y^{\bot W_Z, W_Y, Z})}{\sd(U^{\bot W_Z, W_Y, Z})} \cdot \frac{\sd(U^{\bot W_Y})}{\sd(W_Z^{\bot W_Y})}\\
=& R_{Y\sim U \mid W_Z, W_Y, Z} \cdot R_{W_Z \sim U \mid W_Y} \cdot\frac{\sd(U^{\bot W_Y})}{\sd(U^{\bot W_Z, W_Y, Z})}  \frac{\sd(Y^{\bot W_Z, W_Y, Z})}{\sd(W_Z^{\bot W_Y})} \\
=& R_{Y\sim U \mid W_Z, W_Y, Z} \cdot R_{W_Z \sim U \mid W_Y} \cdot \sqrt{\frac{1}{1-R^2_{U \sim W_Z, Z \mid W_Y}}}  \frac{\sd(Y^{\bot W_Z, W_Y, Z})}{\sd(W_Z^{\bot W_Y})}  \\
=& R_{Y\sim U \mid W_Z, W_Y, Z} \cdot R_{W_Z \sim U \mid W_Y} \\
& \times \sqrt{\frac{1}{(1-R^2_{U \sim Z \mid W_Z, W_Y})(1-R^2_{U \sim W_Z \mid W_Y})}}  \frac{\sd(Y^{\bot W_Z, W_Y, Z})}{\sd(W_Z^{\bot W_Y})}
\end{align*}

Combining together: \begin{align*}
&\frac{1}{R_{W_Z \sim Z \mid W_Y}} \cdot \frac{\sd(W_Z^{\bot W_Y})}{\sd(Z^{\bot W_Y})} \left\{R_{Y \sim W_Z \mid W_Y, Z, U} \sqrt{\frac{\var(Y^{\bot W_Y, Z})}{\var(W_Z^{\bot W_Y})} \cdot \frac{(1-R^2_{Y \sim U \mid W_Y, Z})}{(1-R^2_{W_Z \sim U \mid W_Y, Z})(1-R^2_{Z \sim W_Z \mid W_Y})}}
\right. \\
&+\left. R_{Y\sim U \mid W_Z, W_Y, Z} \cdot R_{W_Z \sim U \mid W_Y}\sqrt{\frac{1}{1-R^2_{W_Z \sim U \mid W_Y}} \cdot \frac{1}{1-R^2_{Z \sim U \mid W_Y, W_Z}}} \cdot \frac{\sd(Y^{\bot W_Z, W_Y, Z})}{\sd(W_Z^{\bot W_Y})}\right\} \\
=&\frac{R_{Y \sim W_Z \mid W_Y, Z, U} }{R_{W_Z \sim Z \mid W_Y}} \cdot \frac{\sd(Y^{\bot W_Y, Z})}{\sd(Z^{\bot W_Y})} \cdot \left\{ \sqrt{\frac{1-R^2_{Y \sim U \mid W_Y, Z}}{1-R^2_{W_Z \sim U \mid W_Y, Z}} \cdot \frac{1}{1-R^2_{Z \sim W_Z \mid W_Y}}} \right. \\
&+\left. R_{Y\sim U \mid W_Z, W_Y, Z} \cdot R_{W_Z \sim U \mid W_Y}\sqrt{\frac{1}{1-R^2_{W_Z \sim U \mid W_Y}} \cdot \frac{1}{1-R^2_{Z \sim U \mid W_Y, W_Z}}} \cdot \frac{\sd(Y^{\bot W_Z, W_Y, Z})}{\sd(Y^{\bot W_Y, Z})}\right\} \\
=&\frac{R_{Y \sim W_Z \mid W_Y, Z, U} }{R_{W_Z \sim Z \mid W_Y}} \cdot \frac{\sd(Y^{\bot W_Y, Z})}{\sd(Z^{\bot W_Y})} \cdot \left\{ \sqrt{\frac{1-R^2_{Y \sim U \mid W_Y, Z}}{1-R^2_{W_Z \sim U \mid W_Y, Z}} \cdot \frac{1}{1-R^2_{Z \sim W_Z \mid W_Y}}} \right. \\
&+\left. R_{Y\sim U \mid W_Z, W_Y, Z} \cdot R_{W_Z \sim U \mid W_Y}\sqrt{\frac{1}{1-R^2_{W_Z \sim U \mid W_Y}} \cdot \frac{R^2_{Y \sim W_Z \mid W_Y, Z}}{1-R^2_{Z \sim U \mid W_Y, W_Z}}}\right\} \\
=&\frac{R_{Y \sim W_Z \mid W_Y, Z, U} }{R_{W_Z \sim Z \mid W_Y}} \cdot \frac{\sd(Y^{\bot W_Y, Z})}{\sd(Z^{\bot W_Y})} \cdot \left\{ \sqrt{\frac{1-R^2_{Y \sim U \mid W_Y, Z}}{1-R^2_{W_Z \sim U \mid W_Y, Z}} \cdot \frac{1}{1-R^2_{Z \sim W_Z \mid W_Y}}} \right. \\
&+\left. R_{Y\sim U \mid W_Z, W_Y, Z} \cdot R_{W_Z \sim U \mid W_Y}\sqrt{\frac{1}{1-R^2_{Z \sim U \mid W_Y, W_Z}} \cdot \frac{1}{1-R^2_{W_Z \sim U \mid W_Y}} \cdot R^2_{Y \sim W_Z \mid W_Y, Z}}\right\}
\end{align*}

\subsection{\texorpdfstring{Corollary~\ref{cor-relative-bias-iv}}{Corollary~}}\label{corollary-cor-relative-bias-iv}

\begin{proof}
\begin{align*}
R^2_{Y \sim W_Z \mid W_Y, Z, U} &\leq \frac{R^2_{Y \sim U \mid W_Y, W_Z, Z}}{1- R^2_{Y \sim U \mid W_Y, Z}} \cdot \frac{R^2_{Z \sim U \mid W_Y, W_Z}}{1-R^2_{Z \sim U \mid W_Y, W_Z}} \cdot \left( 1- R^2_{Z \sim W_Z \mid W_Y} \right) \cdot \frac{\var(Z^{\bot W_Y})}{\var(Y^{\bot W_Y, Z})} \cdot R^2_{Z \sim W_Z \mid W_Y} \cdot \frac{\var(Y^{\bot W_Y, W_Z, Z})}{\var(Z^{\bot W_Y, W_Z})}\\
&= \frac{R^2_{Y \sim U \mid W_Y, W_Z, Z}}{1- R^2_{Y \sim U \mid W_Y, Z}} \cdot \frac{R^2_{Z \sim U \mid W_Y, W_Z}}{1-R^2_{Z \sim U \mid W_Y, W_Z}} \cdot \left( 1- R^2_{Z \sim W_Z \mid W_Y} \right) \cdot \frac{ \var(Y^{\bot W_Y, W_Z, Z})}{\var(Y^{\bot W_Y, Z})} \cdot R^2_{Z \sim W_Z \mid W_Y} \cdot\frac{\var(Z^{\bot W_Y})}{\var(Z^{\bot W_Y, W_Z})}\\
&= \frac{R^2_{Y \sim U \mid W_Y, W_Z, Z}}{1- R^2_{Y \sim U \mid W_Y, Z}} \cdot \frac{R^2_{Z \sim U \mid W_Y, W_Z}}{1-R^2_{Z \sim U \mid W_Y, W_Z}} \cdot \left( 1- R^2_{Z \sim W_Z \mid W_Y} \right) \cdot (1-R^2_{Y \sim W_Z \mid W_Y, Z}) \cdot  \frac{R^2_{Z \sim W_Z \mid W_Y}}{1-R^2_{Z \sim W_Z \mid W_Y}} \\
&= \frac{R^2_{Y \sim U \mid W_Y, W_Z, Z}}{1- R^2_{Y \sim U \mid W_Y, Z}} \cdot \frac{R^2_{Z \sim U \mid W_Y, W_Z}}{1-R^2_{Z \sim U \mid W_Y, W_Z}} \cdot (1-R^2_{Y \sim W_Z \mid W_Y, Z}) \cdot  R^2_{Z \sim W_Z \mid W_Y} \\
&= \frac{R^2_{Y \sim U \mid W_Y, W_Z, Z}}{1- \gamma \cdot R^2_{Y \sim U \mid W_Y +W_Z, Z}} \cdot \frac{R^2_{Z \sim U \mid W_Y, W_Z}}{1-R^2_{Z \sim U \mid W_Y, W_Z}} \cdot (1-R^2_{Y \sim W_Z \mid W_Y, Z}) \cdot  R^2_{Z \sim W_Z \mid W_Y}
\end{align*}
\end{proof}

\subsection{\texorpdfstring{Theorem~\ref{thm-prox-bias}}{Theorem~}}\label{theorem-thm-prox-bias}

\begin{proof}
As in the main text, use Frisch-Waugh-Lovell to write
\(\tau_{\prox}-\tau\) as \begin{align*}
\tau_{\prox} -\tau &=
    \underbrace{\beta_u \frac{\cov(U^{\bot \hat W_Y}, Z^{\bot \hat W_Y})}{\var(Z^{\bot \hat W_Y})}}_{(1)} +
    \underbrace{\beta_{w_y} \frac{\cov(W_Y^{\bot \hat W_Y}, Z^{\bot \hat W_Y})}{\var(Z^{\bot \hat W_Y})}}_{(2)} +
    \underbrace{\beta_{w_z} \frac{\cov(W_Z^{\bot \hat W_Y}, Z^{\bot \hat W_Y})}{\var(Z^{\bot \hat W_Y})}}_{(3)}.
\end{align*}

\noindent Recall from Lemma~\ref{lem-wy-y}, term (2) is equal to zero.
We can re-write term (2) as a function of
\(\cov(U^{\bot \hat W_Y}, Z^{\bot \hat W_Y})\) to re-write term (1):
\begin{align*}
\beta_{w_y}& \cdot \frac{\cov(W_Y^{\bot \hat W_Y}, Z^{\bot \hat W_Y})}{\var(Z^{\bot \hat W_Y})}\\
&= \beta_{w_y} \cdot \frac{\alpha_u \cov(U^{\bot \hat W_Y}, Z^{\bot \hat W_Y}) + \cov(\varepsilon_{w_y}^{\bot \hat W_Y}, Z^{\bot \hat W_Y})}{\var(Z^{\bot \hat W_Y})} \\
&= \beta_{w_y} \cdot \frac{\alpha_u \cov(U^{\bot \hat W_Y}, Z^{\bot \hat W_Y}) + (\gamma_{w_y} + \gamma_{w_z} \varphi_{w_y}) \var(\varepsilon_{w_y}^{\bot \hat W_Y})}{\var(Z^{\bot \hat W_Y})}\\
&= \beta_{w_y} \cdot \frac{\alpha_u \cov(U^{\bot \hat W_Y}, Z^{\bot \hat W_Y}) + (\gamma_{w_y} + \gamma_{w_z} \varphi_{w_y}) \left\{ \var(W_Y^{\bot \hat W_Y}) - \alpha_u^2 \var(U^{\bot \hat W_Y}) \right\} }{\var(Z^{\bot \hat W_Y})},
\end{align*} where we have used
\(\cov(Z, \varepsilon_{w_y}) := (\gamma_{w_y} + \gamma_{w_z} \varphi_{w_y}) \var(\varepsilon_{w_y})\)
(i.e., Equation \ref{eqn:cov_z_eps_wy}). Setting this term equal to
zero, we can re-write
\(\cov(U^{\bot \hat W_Y}, Z^{\bot \hat W_Y})/\var(Z^{\bot \hat W_Y})\)
as: \begin{align*}
\frac{\cov(U^{\bot \hat W_Y}, Z^{\bot \hat W_Y})}{\var(Z^{\bot \hat W_Y})} &= - \frac{\gamma_{w_y} + \gamma_{w_z} \varphi_{w_y}}{\alpha_u} \frac{\var(W_Y^{\bot \hat W_Y}) - \alpha_u^2 \var(U^{\bot \hat W_Y})}{\var(Z^{\bot \hat W_Y})}
\end{align*} Substituting into term (1), and defining
\(S_{W_Z,Z}\coloneq\cor(W_Z^{\bot \hat W_Y}, Z^{\bot \hat W_Y})\),
\begin{align*}
\tau_{\prox} - \tau &= - \beta_u \frac{\gamma_{w_y} + \gamma_{w_z} \varphi_{w_y}}{\alpha_u} \frac{\var(W_Y^{\bot \hat W_Y}) - \alpha_u^2 \var(U^{\bot \hat W_Y})}{\var(Z^{\bot \hat W_Y})} + \beta_{w_z} \frac{\cov(W_Z^{\bot \hat W_Y}, Z^{\bot \hat W_Y})}{\var(Z^{\bot \hat W_Y})}\\
&=  S_{W_Z, Z}\cdot \beta_{w_z} \frac{\sd(W_Z^{\bot \hat W_Y})}{\sd(Z^{\bot \hat W_Y})}- \frac{\beta_u}{\alpha_u} \left( \gamma_{w_y} + \gamma_{w_z} \varphi_{w_y} \right) \frac{ \var(W_Y^{\bot \hat W_Y}) - \alpha_u^2 \var(U^{\bot \hat W_Y})}{ \var(Z^{\bot \hat W_Y})}
\end{align*}

Notice that because \(\hat W_Y\) is a linear combination of \(W_Z\) and
\(Z\), we must have
\(|\cor(W_Z^{\bot \hat W_Y}, Z^{\bot \hat W_Y})|=1\), so
\(S_{W_Z,Z}\in\{-1, 1\}\).
\end{proof}

\subsection{\texorpdfstring{Corollary~\ref{cor-prox-r2}}{Corollary~}}\label{corollary-cor-prox-r2}

\begin{proof}
We leverage the expression from \eqref{eqn:beta_wz_r2_2} to re-write
\(\beta_{w_z}\): \begin{align*}
S_{W_Z, Z} \cdot \beta_{w_z} \cdot \frac{\sd(W_Z^{\bot \hat W_Y})}{\sd(Z^{\bot \hat W_Y})} &= S_{W_Z, Z}  \cdot R_{Y \sim W_Z \mid W_Y, Z, U} \sqrt{\frac{\var(Y^{\bot W_Y, Z})}{\var(W_Z^{\bot W_Y})} \cdot \frac{(1-R^2_{Y \sim U \mid W_Y, Z})}{(1-R^2_{W_Z \sim U \mid W_Y, Z})(1-R^2_{Z \sim W_Z \mid W_Y})}} \cdot \frac{\sd(W_Z^{\bot \hat W_Y})}{\sd(Z^{\bot \hat W_Y})} \\
&=  S_{W_Z, Z}  \cdot R_{Y \sim W_Z \mid W_Y, Z, U} \sqrt{\frac{\var(Y^{\bot W_Y, Z})}{\var(Z^{\bot \hat W_Y})} \cdot \frac{\var(W_Z^{\bot \hat W_Y})}{\var(W_Z^{\bot W_Y})} \cdot \frac{(1-R^2_{Y \sim U \mid W_Y, Z})}{(1-R^2_{W_Z \sim U \mid W_Y, Z})(1-R^2_{Z \sim W_Z \mid W_Y})}}
\end{align*}

We can re-write \(\beta_u\) as: \begin{align}
\beta_u &= \frac{\cov(U^{\bot W_Z, W_Y, Z}, Y^{\bot W_Z, W_Y, Z})}{\var(U^{\bot W_Z, W_Y, Z})} \nonumber \\
&= \cor(U^{\bot W_Z, W_Y, Z}, Y^{\bot W_Z, W_Y, Z}) \cdot \frac{\sd(Y^{\bot W_Z, W_Y, Z})}{\sd(U^{\bot W_Z, W_Y, Z})} \nonumber \\
&= R_{Y \sim U \mid W_Z, W_Y, Z} \sqrt{\frac{\var(Y)}{\var(U)} \cdot \frac{1-R^2_{Y \sim W_Y, W_Z, Z}}{(1-R^2_{U \sim W_Z \mid W_Y, Z})(1-R^2_{U \sim W_Y, Z})}} \nonumber \\
&= R_{Y \sim U \mid W_Z, W_Y, Z} \sqrt{\frac{\var(Y)}{\var(U)} \cdot \frac{1-R^2_{Y \sim W_Y, W_Z, Z}}{(1-R^2_{W_Z \sim U \mid W_Y, Z})(1-R^2_{U \sim W_Y, Z})}}
\label{eqn:beta_u_r2}
\end{align}

Noting that
\(\alpha_u := \cov(W_Y, U)/\var(U) = R_{W_Y \sim U} \cdot \sd(W_Y)/\sd(U)\):
\begin{align*}
\frac{\beta_u}{\alpha_u} &= \frac{R_{Y \sim U \mid W_Z, W_Y, Z}}{R_{W_Y \sim U}} \sqrt{\frac{\var(Y)}{\var(W_Y)} \cdot \frac{1-R^2_{Y \sim W_Y, W_Z, Z}}{(1-R^2_{W_Z \sim U \mid W_Y, Z})(1-R^2_{U \sim W_Y, Z})}}
\end{align*}

Then, for the additive term: \begin{align*}
\gamma_{w_y}& + \gamma_{w_z} \varphi_{w_y} \\
&= R_{Z \sim W_Y \mid U, W_Z} \cdot \frac{\sd(Z^{\bot U, W_Z})}{\sd(W_Y^{\bot U, W_Z})} + R_{Z \sim W_Z \mid U, W_Y} \cdot \frac{\sd(Z^{\bot U, W_Y})}{\sd(W_Z^{\bot U, W_Y})} \cdot R_{W_Z \sim W_Y \mid U} \cdot \frac{\sd(W_Z^{\bot U})}{\sd(W_Y^{\bot U})} \\
&=\left( R_{Z \sim W_Y \mid U, W_Z} \cdot \sqrt{\frac{1-R^2_{Z \sim U, W_Z}}{1-R^2_{W_Y \sim U, W_Z}}} + R_{Z \sim W_Z \mid U, W_Y} \cdot R_{W_Z \sim W_Y \mid U} \cdot \sqrt{\frac{1-R^2_{Z \sim U, W_Y}}{1-R^2_{W_Y \sim U}} \frac{1- R^2_{W_Z \sim U}}{1-R^2_{W_Z \sim U, W_Y}}}\cdot \var(W_Y) \right) \cdot \frac{\sd(Z)}{\sd(W_Y)}
\end{align*}

Combining everything together: \begin{align*} 
S_{W_Z, Z}&\cdot \beta_{w_z} \frac{\sd(W_Z^{\bot \hat W_Y})}{\sd(Z^{\bot \hat W_Y})}- \frac{\beta_u}{\alpha_u} \left( \gamma_{w_y} + \gamma_{w_z} \varphi_{w_y} \right) \frac{ \var(W_Y^{\bot \hat W_Y}) - \alpha_u^2 \var(U^{\bot \hat W_Y})}{ \var(Z^{\bot \hat W_Y})} \\
=& S_{W_Z, Z}  \cdot R_{Y \sim W_Z \mid W_Y, Z, U} \sqrt{\frac{\var(Y^{\bot W_Y, Z})}{\var(Z^{\bot \hat W_Y})} \cdot \frac{\var(W_Z^{\bot \hat W_Y})}{\var(W_Z^{\bot W_Y})} \cdot \frac{(1-R^2_{Y \sim U \mid W_Y, Z})}{(1-R^2_{W_Z \sim U \mid W_Y, Z})(1-R^2_{Z \sim W_Z \mid W_Y})}}\\
&- \frac{R_{Y \sim U \mid W_Z, W_Y, Z}}{R_{W_Y \sim U}} \cdot \frac{1}{\var(W_Y)} \sqrt{\var(Y) \cdot \var(Z) \cdot \frac{1-R^2_{Y \sim W_Y, W_Z, Z}}{(1-R^2_{W_Z \sim U \mid W_Y, Z})(1-R^2_{U \sim W_Y, Z})}} \\
&\quad \times \left( R_{Z \sim W_Y \mid U, W_Z} \cdot \sqrt{\frac{1-R^2_{Z \sim U, W_Z}}{1-R^2_{W_Y \sim U, W_Z}}} + R_{Z \sim W_Z \mid U, W_Y} \cdot R_{W_Z \sim W_Y \mid U} \cdot \sqrt{\frac{1-R^2_{Z \sim U, W_Y}}{1-R^2_{W_Y \sim U}} \frac{1- R^2_{W_Z \sim U}}{1-R^2_{W_Z \sim U, W_Y}}}\cdot \var(W_Y) \right) \\
&\quad \quad \times \frac{ \var(W_Y^{\bot \hat W_Y}) - R^2_{W_Y \sim U} \cdot \frac{\var(W_Y)}{\var(U)} \var(U^{\bot \hat W_Y})}{ \var(Z^{\bot \hat W_Y})} \\
=& S_{W_Z, Z}  \cdot R_{Y \sim W_Z \mid W_Y, Z, U} \sqrt{\frac{\var(Y^{\bot W_Y, Z})}{\var(Z^{\bot \hat W_Y})} \cdot \frac{\var(W_Z^{\bot \hat W_Y})}{\var(W_Z^{\bot W_Y})} \cdot \frac{(1-R^2_{Y \sim U \mid W_Y, Z})}{(1-R^2_{W_Z \sim U \mid W_Y, Z})(1-R^2_{Z \sim W_Z \mid W_Y})}}\\
&- \frac{R_{Y \sim U \mid W_Z, W_Y, Z}}{R_{W_Y \sim U}} \cdot \frac{1}{\var(W_Y)} \sqrt{\var(Y) \cdot \var(Z) \cdot \frac{1-R^2_{Y \sim W_Y, W_Z, Z}}{(1-R^2_{W_Z \sim U \mid W_Y, Z})(1-R^2_{U \sim W_Y, Z})}} \\
&\quad \times \left( R_{Z \sim W_Y \mid U, W_Z} \cdot \sqrt{\frac{1-R^2_{Z \sim U, W_Z}}{1-R^2_{W_Y \sim U, W_Z}}} + R_{Z \sim W_Z \mid U, W_Y} \cdot R_{W_Z \sim W_Y \mid U} \cdot \sqrt{\frac{1-R^2_{Z \sim U, W_Y}}{1-R^2_{W_Y \sim U}} \frac{1- R^2_{W_Z \sim U}}{1-R^2_{W_Z \sim U, W_Y}}}\cdot \var(W_Y) \right) \\
&\quad \quad \times \frac{ \var(W_Y^{\bot \hat W_Y}) - R^2_{W_Y \sim U} \cdot \var(W_Y) (1-R^2_{U \sim \hat W_Y})}{ \var(Z^{\bot \hat W_Y})}
\end{align*}
\end{proof}

\end{proof}

\end{document}